\newcommand{\subversion}[1]{}
\newtheorem{theorem}{Theorem}[section]
\newtheorem*{theorem*}{Theorem}
\newtheorem{lemma}[theorem]{Lemma}
\newtheorem{claim}[theorem]{Claim}
\theoremstyle{remark}
\theoremstyle{definition}
\newtheorem{definition}[theorem]{Definition}
\numberwithin{equation}{section}
\newcommand\numberthis{\addtocounter{equation}{1}\tag{\theequation}}
\newtheorem*{rep@theorem}{\rep@title}
\newcommand{\newreptheorem}[2]{
\newenvironment{rep#1}[1]{
 \def\rep@title{#2 \ref{##1}}
 \begin{rep@theorem}}
 {\end{rep@theorem}}}
\newcommand{\bit}{\{0,1\}}
\newcommand{\reg}[1]{\mathrm{#1}}
\newcommand{\ignore}[1]{}
\newcommand{\proj}[1]{\ketbra{#1}{#1}}
\newcommand{\E}{\mathop{\mathbb{E}}}
\newcommand{\N}{\mathbb{N}}
\newcommand{\C}{\mathbb{C}}
\newcommand{\negl}{\mathrm{negl}}
\newcommand\id{\mathbbm{1}}
\newcommand{\linear}{\mathrm{L}}
\newcommand{\unitary}{\mathrm{U}}
\newcommand{\bx}{\bm{x}}
\newcommand{\by}{\bm{y}}
\newcommand{\ba}{\bm{a}}
\newcommand{\ind}{\mathbbm{1}}
\newcommand{\supp}{\mathrm{supp}}
\newcommand{\ot}{\otimes}
\newcommand{\poly}{\mathrm{poly}}
\newcommand{\Haar}{\mathrm{Haar}}
\newcommand{\HaarMeasure}{\Haar}
\DeclarePairedDelimiterX{\hsip}[2]{\langle}{\rangle_{\tiny{\mathtt{HS}}}\xspace}{#1, #2}
\newcommand{\Keyspace}{\mathcal{K}}
\newcommand\algo{\mathcal}
\newcommand{\mom}[3]{\cM_{#1}^{(#2)}\left( #3 \right)}
\newcommand{\momhaart}[1]{\mom{{\rm Haar}}{t}{#1}}
\newcommand{\momr}[1]{\mom{R}{t}{#1}}
\newcommand{\mompft}[1]{\mom{PF}{t}{#1}}
\newcommand{\deq}{\coloneqq}
\newcommand{\distinct}{\mathrm{distinct}}
\newcommand{\cH}{\ensuremath{\mathcal{H}}}
\newcommand{\cK}{\ensuremath{\mathcal{K}}}
\newcommand{\cM}{\ensuremath{\mathcal{M}}}
\newcommand{\capprox}{\overset{c}{\approx}}
\title{Pseudorandom unitaries with non-adaptive security}
\author[1]{Tony Metger}
\author[2]{Alexander Poremba}
\author[3]{Makrand Sinha}
\author[4]{Henry Yuen}
\affil[1]{ETH Zurich}
\affil[2]{Massachusetts Institute of Technology}
\affil[3]{University of Illinois, Urbana-Champaign}
\affil[4]{Columbia University}
\begin{document}
\date{\vspace{-5ex}}

\maketitle

\begin{abstract}
Pseudorandom unitaries (PRUs) are ensembles of efficiently implementable unitary operators that cannot be distinguished from Haar random unitaries by any quantum polynomial-time algorithm with query access to the unitary. We present a simple PRU construction that is a concatenation of a random Clifford unitary, a pseudorandom binary phase operator, and a pseudorandom permutation operator.
We prove that this PRU construction is secure against non-adaptive distinguishers assuming the existence of quantum-secure one-way functions.
This means that no efficient quantum query algorithm that is allowed a single application of $U^{\ot \poly(n)}$ can distinguish whether an $n$-qubit unitary $U$ was drawn from the Haar measure or our PRU ensemble.
We conjecture that our PRU construction remains secure against adaptive distinguishers, i.e.~secure against distinguishers that can query the unitary polynomially many times in sequence, not just in parallel. 
\end{abstract}

\vspace{0.5cm}

\section{Introduction} \label{sec:intro}
Pseudorandom unitaries (PRUs) are ensembles\footnote{Strictly speaking, a PRU ensemble is an infinite sequence $\mathcal{U} = \{\mathcal{U}_n\}_{n \in \N}$ of $n$-qubit unitary ensembles $\mathcal{U}_n = \{U_k\}_{k \in \mathcal{K}}$, where $n \in \N$ serves as the security parameter (see \Cref{def:PRU} for a formal statement).} $\{U_k\}_{k \in \mathcal{K}}$ of unitaries that are efficient to implement, but that look indistinguishable from Haar random unitaries to any polynomial-time distinguisher.
This means that no polynomial-time distinguisher with oracle access to either a Haar random unitary or a unitary chosen uniformly from the PRU ensemble $\{U_k\}$ can tell the two cases apart.

PRUs are the natural quantum analogue to pseudorandom functions (PRFs), an idea that has proven enormously useful in classical computer science and cryptography. PRFs are efficient functions that look indistinguishable from uniformly random functions, just like PRUs are efficient unitaries that look indistinguishable from uniformly random (i.e.~Haar random) unitaries.
One reason why PRFs have been such a useful primitive in computer science is that they allow us to replace uniformly random functions, which are easy to analyse but cannot be implemented efficiently, by an efficient family of functions, as long as those functions are only queried by efficient algorithms.
Similarly, PRUs allow us to replace Haar random unitaries, which are well-understood but not efficiently implementable, by unitaries that are efficient to implement, as long as those unitaries are only applied by efficient quantum algorithms.\footnote{There is another way of replacing uniformly random functions by efficient functions: if we know ahead of time that a function will be queried at most $t$ times, we can replace a uniformly random function by a $t$-wise independent function family. Similarly, if we know that a unitary will only be applied at most $t$ times, we can replace a Haar random unitary by a $t$-design. In contrast, PRFs or PRUs are secure against \emph{all} polynomial time algorithms, i.e.~we do not need to know the number of queries the algorithm will make ahead of time.}

The concept of PRUs was introduced by Ji, Liu, and Song~\cite{ji2018pseudorandom}. 
Their paper gave a conjectured construction of PRUs, but only proved security (assuming quantum-secure one-way functions) for a much weaker primitive called \emph{pseudorandom states} (PRSs).
A pseudorandom state ensemble is a set of states (rather than unitaries) that look indistinguishable from Haar random states (even with access to polynomially many copies of the state).
Since their introduction, PRSs have become an influential concept with  applications in quantum cryptography~\cite{ananth2022cryptography,morimae2022quantum}, lower bounds in quantum learning theory~\cite{huang2022quantum}, and even connections to quantum gravity~\cite{aaronson2024quantum}.
However, proving security for a pseudorandom \emph{unitary} construction has remained an open problem and~\cite{haug2023pseudorandom} even shows restrictions on possible PRU constructions.

Given the difficulty of proving security for PRUs, recent work has considered intermediate steps between PRSs and PRUs.
\cite{lu2023quantum} introduced the notion of a pseudorandom state scrambler (PRSS).
A PRSS ensemble is a set of unitaries $\{U_k\}_{k \in \cK}$ such that for all states $\ket{\phi}$, the state family $\{U_k \ket{\phi}\}_{k \in \cK}$ is a PRS.
If we restrict the state $\ket{\phi}$ to the all-0 state $\ket{0}$, then we recover PRSs as a special case.
\cite{lu2023quantum} showed how to construct a PRSS ensemble (assuming quantum-secure one-way functions) by an intricate analysis of Kac's random walk.
In \cite{ananth2023pseudorandom}, the authors achieve a similar result to PRSS ensembles, except that they consider ensembles of isometries instead of unitaries.

One can also view PRSs and PRSSs as special cases of PRUs, where the distinguisher is restricted to particular queries: a PRS ensemble is a PRU ensemble that is secure against distinguishers that are only allowed a single query to $U^{\ot\poly(n)}$ on the state $\ket{0}^{\ot \poly(n)}$, and a PRSS is a PRU ensemble that is secure against distinguishers that are only allowed a single query to $U^{\ot\poly(n)}$ on a tensor power state $\ket{\phi}^{\ot \poly(n)}$ of their choice.

Our main result is a simple PRU construction with non-adaptive security.
This means that our PRU ensemble is secure against any distinguisher that makes polynomially many \emph{parallel} queries to the PRU, i.e.~the distinguisher is allowed a single query to $U^{\ot \poly(n)}$ on \emph{any input state}, rather than just the restricted classes of input states allowed in PRSs and PRSSs.
We conjecture that our construction also has adaptive security, i.e.~remains secure when the distinguisher is allowed polynomially many \emph{sequential} queries.

\paragraph{Our construction.}
Our PRU ensemble is a concatenation of a random Clifford unitary, a pseudorandom binary
phase operator, and a pseudorandom permutation operator\footnote{Somewhat interestingly, the only place imaginarity shows up (as shown necessary by~\cite{haug2023pseudorandom}) is in the Clifford unitary.}. More concretely, we require:
\begin{itemize}
    \item An ensemble of (quantum-secure) pseudorandom permutations (PRPs)~\cite{zhandry2016note}. Broadly speaking, this is a family $\{\pi_{k_1} : \{0,1\}^n \to \{0,1\}^n \}_{k_1 \in \mathcal{K}_1}$ of permutations with the property that, for a randomly chosen key $k_1 \sim \mathcal{K}_1$, the permutation $\pi_{k_1}$ is computationally indistinguishable from a perfectly random permutation.
    For a given $\pi_{k_1}$, we let $P_{k_1}$ be the corresponding $n$-qubit permutation matrix.
    \item An ensemble of (quantum-secure) pseudorandom functions (PRFs)~\cite{zhandry2021construct}. More formally, this is a family $\{f_{k_2} : \{0,1\}^n \to \{0,1\} \}_{k_2 \in \mathcal{K}_2}$ with the property that, for a randomly chosen key $k_2 \sim \mathcal{K}_2$, the function $\pi_{k_1}$ is computationally indistinguishable from a perfectly random Boolean function. For a given $f_{k_2}$, we let $F_{k_2}: \ket{x} \to (-1)^{f_{k_2}(x)} \ket{x}$ be the $n$-qubit phase oracle implementing $f_{k_2}$.
    \item An ensemble $\{C_{k_3} \}_{k_3 \in \mathcal{K}_{3}}$ of $n$-qubit Clifford unitaries. Note that this is an ensemble of size $2^{O(n^2)}$ from which we can efficiently sample~\cite{berg2021simple}. 
\end{itemize}
Then, our PRU is the $n$-qubit ensemble $\{U_k\}_{k \in \mathcal{K}}$ which is specified by a key $k = (k_1, k_2, k_3)$, where
\begin{equation}
    \label{eq:intro-construction}
    U_k = P_{k_1} F_{k_2} C_{k_3}.
\end{equation}
Note that, assuming the PRP and PRF scheme both have a key space consisting of $\mathcal{K}_1 = \mathcal{K}_2 = \bit^n$, our PRU ensemble has a key length of $|k| = n + n + O(n^2) = O(n^2)$, where $n \in \N$ is the security parameter. Because an ensemble of (quantum-secure) PRFs and PRPs can be constructed from the existence of (quantum-secure) one-way functions~\cite{zhandry2016note,zhandry2021construct}, the same assumption also suffices for our PRU ensemble.

\paragraph{Non-adaptive PRU security.} A non-adaptive quantum algorithm starts with an initial state $\ket{\psi}_{\reg{A}_1 \cdots \reg{A}_t \reg{B}}$ where registers $\reg{A}_1,\ldots,\reg{A}_t$ are each on $n$ qubits, and $\reg{B}$ is an arbitrary workspace register. The algorithm then applies the unitary $U^{\otimes t}$ on the registers $\reg{A}_1, \cdots, \reg{A}_t$ and performs a measurement afterwards. In order to show that no such algorithm can distinguish the PRU from a Haar random unitary, it suffices to show that for all $t = \poly(n)$ and all initial states $\ket{\psi}_{\reg{A}_1 \cdots \reg{A}_t \reg{B}}$, the following two density matrices are computationally indistinguishable:
\begin{equation}
    \label{eqn:security}
        \E_{k \sim \Keyspace} (U_k^{\ot t} \ot \id) \proj{\psi} (U_k^{\ot t} \ot \id)^\dagger \capprox \E_{U \sim \Haar} (U^{\ot t} \ot \id) \proj{\psi} (U^{\ot t} \ot \id)^\dagger~.
\end{equation}
Note that in the above, the registers $\reg{A}_1,\ldots,\reg{A}_t$ of $\ket{\psi}$ may all be entangled with each other, the reduced states on each register may be different from each other, and finally the distinguisher is allowed access to the purification of the input state to the unitaries. In contrast, as mentioned before, the security of PRSS was only established for identical pure-state inputs to the queries. 

\begin{theorem}
    Assuming the existence of quantum-secure one-way functions, the ensemble described in~\eqref{eq:intro-construction} satisfies non-adaptive PRU security. 
\end{theorem}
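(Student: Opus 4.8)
The plan is to first strip away the two computational ingredients by a hybrid argument, and then prove a purely information-theoretic statement about the resulting idealized ensemble. A single non-adaptive query to $U_k^{\ot t}$ can be implemented by an efficient circuit making $t = \poly(n)$ forward and inverse evaluations of the permutation $\pi_{k_1}$ and $t$ phase-queries to $f_{k_2}$ (together with one application of the efficiently-sampleable Clifford $C_{k_3}$). Hence, by quantum security of the PRP and the PRF --- both of which exist assuming quantum-secure one-way functions --- we may replace, with only negligible loss in \eqref{eqn:security}, the permutation $\pi_{k_1}$ by a uniformly random permutation $\pi$ of $\bit^n$ and the function $f_{k_2}$ by a uniformly random $f \colon \bit^n \to \bit$. (The reductions are themselves non-adaptive and make $\poly(n)$ queries, so quantum security applies.) It then suffices to prove the \emph{statistical} bound
\[
\max_{\ket\psi}\ \Bigl\|\, \E_{\pi,f,C}(W^{\ot t}\ot\id)\proj\psi(W^{\ot t}\ot\id)^\dagger \;-\; \E_{U \sim \Haar}(U^{\ot t}\ot\id)\proj\psi(U^{\ot t}\ot\id)^\dagger \,\Bigr\|_1 \le \negl(n)
\]
for every $t = \poly(n)$, where $W = P_\pi F_f C$ with $\pi, f, C$ uniformly random.

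I would next reduce this to a statement about moment operators that no longer mentions $\ket\psi$. Both sides act as the identity on the workspace register $\reg B$, and $\ket\psi$ together with $\reg B$ is an arbitrary purification of the input to the two $t$-copy channels; so the left-hand side is the diamond-norm distance between the channels $X \mapsto \E_{\pi,f,C} W^{\ot t} X\, (W^\dagger)^{\ot t}$ and $X \mapsto \E_{U}\, U^{\ot t} X\, (U^\dagger)^{\ot t}$ on $(\C^{2^n})^{\ot t}$, which is controlled by a suitable norm distance between their $t$-th moment operators. The Haar moment operator is standard: since $2^n \gg t$, the map $X \mapsto \E_U U^{\ot t} X (U^\dagger)^{\ot t}$ is the Hilbert--Schmidt-orthogonal projection onto $\mathrm{span}\{R_\sigma : \sigma \in S_t\}$, where $R_\sigma$ permutes the $t$ tensor factors. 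Thus the theorem reduces to showing that the idealized moment operator is negligibly close to this projection.

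The technical heart is to analyze the idealized moment operator by peeling off the three averages from the inside out. First, $X \mapsto \E_C C^{\ot t} X (C^\dagger)^{\ot t}$ is the orthogonal projection onto the commutant of the $t$-fold Clifford action; by the known characterization of the Clifford commutant, this is $\mathrm{span}\{R_\sigma : \sigma \in S_t\}$ together with additional operators $R_T$ indexed by the stochastic Lagrangian subspaces that are not permutations, each of which has an explicit computational-basis form encoding collision and parity constraints among the $t$ copies. Crucially, the permutation part is left invariant by the remaining twirl $X \mapsto \E_{\pi,f}(P_\pi F_f)^{\ot t} X \bigl((P_\pi F_f)^{\ot t}\bigr)^\dagger$, because each $R_\sigma$ commutes with $V^{\ot t}$ for every unitary $V$; hence the idealized moment operator equals the Haar one \emph{plus} the image, under this $\E_{\pi,f}$-twirl, of the span of the non-permutation $R_T$'s. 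It remains to bound this leftover term: morally, a uniformly random permutation $\pi$ destroys the collision/support pattern recorded by each $R_T$ and a uniformly random sign function $f$ destroys its parity constraints, so that $\E_{\pi,f}$-twirling collapses $\mathrm{span}\{R_T : T \text{ not a permutation}\}$ onto $\mathrm{span}\{R_\sigma\}$ up to a negligible residue.

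Making this last step precise is where I expect essentially all of the effort to go, and it is the main obstacle. The non-permutation $R_T$'s are super-polynomially numerous --- their number grows like $2^{\Theta(t^2)}$ --- so for $t = \poly(n)$ they cannot be treated one at a time against a $2^{-\Omega(n)}$ bound; instead one must control the $\E_{\pi,f}$-twirl on the whole non-permutation subspace simultaneously, exploiting cancellations in the interplay between the combinatorics of the Clifford commutant and the permutation/phase averaging. By comparison, the hybrid argument, the reduction to moment operators, and the identification of the Haar moment operator are routine.
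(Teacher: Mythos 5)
Your hybrid step (replacing the PRP and PRF by their idealized counterparts at negligible cost) matches the paper, and your reduction to a statistical statement about $W = P_\pi F_f C$ is sound. From there, however, you take a genuinely different route that you yourself do not complete, and the uncompleted step is exactly the crux.

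You propose to peel off the Clifford twirl first and invoke the full commutant of the $t$-fold Clifford action (the span of the $R_T$'s indexed by stochastic Lagrangian subspaces, \`a la Gross--Nezami--Walter), then argue that the $\E_{\pi,f}$-twirl annihilates the non-permutation part of that commutant up to a negligible residue. You correctly observe that the number of non-permutation $R_T$'s grows like $2^{\Theta(t^2)}$, so a term-by-term bound against $2^{-\Omega(n)}$ fails when $t = \poly(n)$, and you flag that controlling the $\E_{\pi,f}$-twirl on the entire non-permutation subspace at once is ``where essentially all of the effort'' would go. That is precisely the gap: you have described the obstacle but not a mechanism to overcome it. In particular, the decomposition $\xi = \xi_{\mathrm{perm}} + \xi_{\perp}$ you implicitly want is an HS-orthogonal decomposition of a non-orthogonal spanning set, the components are not positive operators, and you would still need an operator-level bound on $\|\mompft{\xi_\perp}\|_1$ in the presence of an entangled workspace register; none of this is routine, and you give no argument for it.

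The paper sidesteps the Clifford commutant entirely. It uses the Clifford only as a $2$-design, to show (\Cref{lem:clifford}) that the post-Clifford state is $O(t^2/d)$-close to the \emph{distinct subspace} (tuples $\ket{x_1,\dots,x_t}$ with all $x_i$ distinct). Restricted to that subspace, the permutation--phase twirl admits an exact closed form (\Cref{lem:dis_string_to_perm,lem:pf-twirl}): the $F$-twirl kills all off-diagonal terms $\ketbra{\bx}{\by}$ except those with $\by = \bx_\sigma$, and the $P$-twirl then symmetrizes over all distinct tuples, producing $\Lambda R_\sigma / \Tr[\Lambda]$. Rewriting in the Schur--Weyl basis via Schur orthogonality (\Cref{lem:K_simplified}) shows the result is, blockwise over $\lambda \vdash t$, the maximally mixed state on a subspace $\Lambda^{(\lambda)}_{W_\lambda} \subseteq W_\lambda$, while the Haar twirl gives the maximally mixed state on all of $W_\lambda$ (\Cref{lem:haar-twirl}). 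A dimension count (\Cref{clm:distinct-trace,cm:ratio}) shows $\dim \Lambda^{(\lambda)}_{W_\lambda} / \dim W_\lambda = 1 - O(t^2/d)$, which gives the trace-distance bound directly. This route never needs the classification of the Clifford commutant, and it trades your open combinatorial estimate for elementary representation theory of $S_t$. You should either supply the missing quantitative bound on the $\E_{\pi,f}$-twirl of the non-permutation $R_T$'s, or switch to the distinct-subspace strategy, which requires no such estimate.
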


\subsection{Proof overview} 
For this proof overview, we assume that our algorithm has no workspace, i.e. the initial state is $\ket{\psi}_{\reg{A}_1\cdots\reg{A}_t}$. This is merely to simplify the notation, the proof works exactly in the same way even in the case of an entangled workspace. In order to establish computational indistinguishability as in \cref{eqn:security}, we replace the pseudorandom permutation and function with their truly random counterparts and show information-theoretic indistinguishability. Let $P$ denote a uniformly random permutation matrix on $n$-qubits, let $F: \ket{x} \to (-1)^{f(x)}\ket{x}$ be a diagonal unitary with a uniformly random function $f$, and let $C$ be a uniformly random $n$-qubit Clifford, all sampled independently. It then suffices to show that the following two density matrices are close in trace distance:
\begin{equation}
    \label{eqn:inf-security}
      \E_{PFC} (PFC)^{\ot t} \proj{\psi} ((PFC)^{\ot t})^\dagger \approx \E_{U \sim \Haar} U^{\ot t}  \proj{\psi} (U^{\ot t})^\dagger~.
\end{equation}

For this, we leverage Schur-Weyl duality to compute what these two density matrices explicitly look like. Let $d=2^n$ be the dimension.
According to Schur-Weyl duality:
\begin{enumerate}[(1)]
    \item the space $(\C^d)^{\ot t}$ can be decomposed as a direct sum\footnote{Here $\lambda \vdash t$ means that $\lambda$ is a partition (commonly described by a Young diagram) of $[t]$.} $\bigoplus_{\lambda \vdash t} P_\lambda$ where $P_\lambda = {W_\lambda \ot V_\lambda}$ is a tensor product of two spaces, and
    \item any unitary $U^{\otimes t}$ only acts non-trivially on the subspaces $W_{\lambda}$ and any unitary $R_\pi$ that permutes the $t$ subsystems according to a permutation $\pi \in S_t$ acts non-trivially only on the subspaces $V_{\lambda}$.
\end{enumerate}

Using this, we show that applying a \emph{$t$-wise Haar twirl} to $\ket{\psi}$ to obtain the state on the right hand side of \cref{eqn:inf-security} results in the following state:
\begin{equation*}
    \E_{U \sim \Haar} U^{\ot t}  \proj{\psi} (U^{\ot t})^\dagger = \sum_{\lambda \vdash t} \frac{\ind_{W_\lambda}}{\Tr[\ind_{W_\lambda}]} \otimes \Tr_{W_\lambda}[\ind_{P_\lambda} \ketbra{\psi}{\psi}\ind_{P_\lambda}]\,,
\end{equation*}
where $\ind_{W_{\lambda}}$ is the identity on the subspace $W_\lambda$, i.e.~the orthogonal projection onto that subspace.
In particular, the state is a direct sum of tensor product states where the state on the subspaces $W_{\lambda}$ is maximally mixed. 

Next, we would like to show that applying a \emph{$t$-wise $PFC$ twirl} to $\ket{\psi}$ to obtain the left hand side state in \cref{eqn:inf-security} results in a state that is close to the above. For technical reasons, the computations here are easier if the state $\ket{\psi}$ is supported only on the subspace of distinct computational basis states in the registers $\reg{A}_1, \ldots \reg{A}_t$, i.e., the subspace spanned by $\ket{x_1,\ldots,x_t}$ where $x_1,\ldots, x_t$ are all distinct. We show that applying a random Clifford ensures this since
\[\Tr[\Lambda \E_{C} C^{\ot t} \ketbra{\psi}{\psi} C^{\ot t,\dagger}] \geq 1 - O(t^2/d),\]
where $\Lambda$ denotes the projector on the \emph{distinct subspace}.

For states $\ket{\psi}$ in the distinct subspace, we show that a $t$-wise $PF$ twirl results in
\begin{equation*}
    \E_{PF} (PF)^{\ot t}  \proj{\psi} ((PF)^{\ot t})^\dagger = \sum_{\lambda \vdash t} \frac{\ind_{\Lambda_\lambda}}{\Tr[\ind_{\Lambda_{\lambda}}]} \otimes \Tr_{W_\lambda}[\ind_{P_\lambda} \ketbra{\psi}{\psi}\ind_{P_\lambda}],
\end{equation*}
where $\Lambda_\lambda$ is a subspace of $W_{\lambda}$ that comes from decomposing the distinct subspace projector in terms of the Schur-Weyl subspaces. 
We show that $\Lambda_{\lambda}$ fills most of $W_\lambda$, i.e.~the dimension of the subspace $\Lambda_\lambda$ is close to the dimension of $W_\lambda$:
$$\frac{\Tr[\ind_{\Lambda_{\lambda}}]}{\Tr[\id_{W_\lambda}]} = 1- O\left(\frac{t^2}{d}\right).$$
This implies that the mixed state on $\Lambda_\lambda$ is close in trace distance to the maximally mixed state on $W_{\lambda}$: 
$$\left\|\frac{\ind_{\Lambda_\lambda}}{\Tr[\ind_{\Lambda_{\lambda}}]} - \frac{\ind_{W_\lambda}}{\Tr[\ind_{W_\lambda}]}\right\|_1 = O \Big(\frac{t^2}d \Big)\,.$$

Since this is true for each $\lambda$, putting all the above together, we get that left and right hand sides in \cref{eqn:inf-security} have trace distance at most $t^2/d$, which is exponentially small since $t= \poly(n)$ and $d=2^n$. This completes the sketch of the proof.

\subsection{Discussion and future directions}

We prove that our PRU construction is secure against non-adaptive adversaries.
There are two directions in which one would like to strengthen this result: allow the adversary to access the inverse of the unitary, too, and allowing the adversary to query the unitary adaptively, i.e.~make a query, then apply some efficient unitary to the resulting state, make another query on that state, and so on.

Aside from constructing PRUs, their applications are also largely unexplored.
Given the utility of PRFs in classical computer science, PRUs appear to be a fundamental primitive for quantum computer science, but not much is known about concrete applications.

One natural area of application is quantum cryptography.
For example,~\cite{lu2023quantum} showed that PRUs (and even PRSSs) can be useful for multi-copy quantum cryptography:
in most encryption schemes for quantum messages (e.g., the quantum one-time pad and its variants), if we wish to encrypt multiple copies of the same quantum state, we need to sample fresh keys for each new ciphertext.~\cite{lu2023quantum} observed that Haar (pseudo)randomness allows one to perform such a task in a compact manner with only a single key (for an arbitrary polynomial amount of copies). If the state to be encrypted is guaranteed to be unentangled with the environment, the PRSSs from~\cite{lu2023quantum} suffice; in the general case which allows for entanglement with an auxiliary system, non-adaptively PRUs as constructed in our work seem necessary. 
As another example, PRUs might be useful in the context of unclonable cryptography. Many constructions, such as those for unclonable encryption~\cite{broadbent2020uncloneable} or quantum copy-protection~\cite{coladangelo2020quantum,coladangelo2021hidden}, make use of either Wiesner states or subspace coset states---both of which are completely broken once identical copies become available. It seems plausible that one could use PRUs to construct multi-copy secure unclonable encryption schemes, and even multi-copy secure quantum copy-protection schemes.

Another area of application concerns the time evolution of chaotic quantum systems. In the past few years, a series of works has proposed using Haar random unitaries as ``perfect scramblers''~\cite{Hayden_2007,brown2012scrambling,susskind2016computational} to model such dynamics. However, a more recent line of work has instead shifted towards quantum pseudorandomness in order to model such phenomena in terms of \emph{efficient} processes.
For example, Kim and Preskill~\cite{Kim_2023} use PRUs to model the internal dynamics of a black hole, whereas Engelhardt et al.~\cite{engelhardt2024cryptographic} use PRUs to model the time evolution operator of a holographic conformal field theory. Due to their scrambling properties, it is conceivable that PRUs will find more applications in theoretical physics.

\paragraph{Acknowledgments.} We thank Prabhanjan Ananth, Adam Bouland, Tudor Giurgica-Tiron, Jonas Haferkamp, Isaac Kim, and John Wright for helpful discussions. We also thank John for suggesting a simplified proof of \Cref{lem:K_simplified}.
We thank the Simons Institute for the Theory of Computing, where some of this work was conducted.
TM acknowledges support from the ETH Zurich Quantum Center, the SNSF QuantERA project (grant 20QT21\_187724), the AFOSR grant FA9550-19-1-0202, and an ETH Doc.Mobility Fellowship. AP is supported by the National Science Foundation (NSF) under Grant No. CCF-1729369.
HY is supported by AFOSR awards
FA9550-21-1-0040 and FA9550-23-1-0363, NSF CAREER award CCF-2144219, and the Sloan Foundation.

\section{Preliminaries}

\subsection{Notation and basic definitions}
\label{sec:notation}
\paragraph{Linear Maps.} For a Hilbert space $\cH$, we denote by $\linear(\cH)$ linear operators on $\cH$.
A map $\cM: \linear(\cH) \to \linear(\cH)$ is called a quantum channel if it is completely positive and trace-preserving.
If $\cH'$ is an additional Hilbert space and $O \in \linear(\cH \ot \cH')$ is an operator on a larger space, we write $\cM(O)$ to mean $\cM$ applied to the $\cH$-subsystem, i.e.~$\cM(O)$ is shorthand for $(\cM \ot \id_{\cH'})(O)$. For a subspace $W$ of a vector space $V$, we denote by $\ind_{W}$ the orthogonal projection onto $W$. 

\paragraph{Distinct tuples.} We use bold faced fonts to denote tuples. For a tuple $\bx = (x_1,\dots,x_t) \in [d]^t$ and a permutation $\sigma \in S_t$, we write $\bx_{\sigma} = (x_{\sigma(1)}, \cdots,x_{\sigma(t)})$ for the tuple where the indices are permuted according to $\sigma.$ We call a tuple $\bx \in [d]^t$ distinct if $x_i \neq x_j$ for all $i \neq j$. We denote the set of distinct tuples in $[d]^t$ by $\distinct(d,t)$.
We also define the projector onto the subspace of distinct tuples 
\begin{equation}\label{def:distinct}
    \ \Lambda_{d,t} = \sum_{\substack{\bx \in \distinct(d,t)}} \proj{\bx} \,.
\end{equation}
We will frequently drop the indices $d,t$ and just write $\Lambda$, with $d$ and $t$ being clear from context.

\paragraph{Permutation  operators.} \label{sec:prelim_phase_perm}

We define the following permutation operator on $\C^d$.

\begin{definition}[Permutation operator on $\C^d$]
Define the permutation operator $P_\pi$ on $\C^d$ for $\pi \in S_d$ to be the linear map
\begin{align}\label{def:permutationop}
P_\pi: \ket{x} \mapsto \ket{\pi(x)}\,.
\end{align}
\end{definition}

We will frequently consider uniformly random permutation operators on $\C^d$. We will suppress the dependence on $\pi$ and write the random operator as $P$.

\paragraph{Symmetric group and representations.}

Unitary representations of a group allow us to represent the elements of the group as unitary matrices over a vector space in a way that the group operation is represented by matrix multiplication. We consider the following representation of the symmetric group which permutes the tensor factors.

\begin{lemma}[Representation of $S_t$ on tensor product spaces]
For any fixed $d$, define the permutation operator $R_\pi$ on $(\C^d)^{\ot t}$ for $\pi \in S_t$ to be the map
\[
    R_\pi: \ket{\ba} \mapsto \ket{\ba_{\pi^{-1}}}~.
\]
Then $((\C^d)^{\ot t}, R_\pi$ forms a unitary representation of $S_t$.
Note that we leave the dependence of $R_\pi$ on the choice of $d$ implicit.
\end{lemma}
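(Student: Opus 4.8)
The plan is to verify the two defining properties of a unitary representation — that each $R_\pi$ is unitary, and that $\pi \mapsto R_\pi$ is a group homomorphism — directly on the computational basis $\{\ket{\ba} : \ba \in [d]^t\}$ of $(\C^d)^{\ot t}$, where $\ket{\ba} = \ket{a_1} \ot \cdots \ot \ket{a_t}$. For unitarity, I would first observe that for each fixed $\pi \in S_t$ the map $\ba \mapsto \ba_{\pi^{-1}}$ is a bijection of $[d]^t$: if $c_i = a_{\pi^{-1}(i)}$ for all $i$, then $c_{\pi(j)} = a_j$ for all $j$, i.e.\ $\bm{c}_\pi = \ba$, so $\ba \mapsto \ba_{\pi^{-1}}$ has two-sided inverse $\bm{c} \mapsto \bm{c}_\pi$. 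Hence $R_\pi$ permutes the orthonormal basis $\{\ket{\ba}\}_{\ba \in [d]^t}$, so it maps an orthonormal basis to an orthonormal basis and is therefore unitary, with $R_\pi^\dagger = R_\pi^{-1}$.

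For the homomorphism property I would check $R_\pi R_\sigma = R_{\pi\sigma}$ for all $\pi,\sigma \in S_t$; since both sides are linear operators, it suffices to compare their actions on each basis vector $\ket{\ba}$. Applying $R_\sigma$ gives $\ket{\bm{b}}$ with $b_i = a_{\sigma^{-1}(i)}$, and applying $R_\pi$ to that gives $\ket{\bm{c}}$ with $c_i = b_{\pi^{-1}(i)} = a_{\sigma^{-1}(\pi^{-1}(i))} = a_{(\pi\sigma)^{-1}(i)}$, so $\bm{c} = \ba_{(\pi\sigma)^{-1}}$ and hence $R_\pi R_\sigma \ket{\ba} = \ket{\ba_{(\pi\sigma)^{-1}}} = R_{\pi\sigma}\ket{\ba}$. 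Moreover $R_e \ket{\ba} = \ket{\ba_{e^{-1}}} = \ket{\ba}$ for the identity $e \in S_t$, so $R_e = \id$. Combined with unitarity of each $R_\pi$, this exhibits $\pi \mapsto R_\pi$ as a unitary representation of $S_t$ on $(\C^d)^{\ot t}$; as a byproduct $R_\pi^\dagger = R_{\pi^{-1}}$.

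This is a routine verification with no genuine obstacle; the only point deserving care is the placement of the inverse in the definition $R_\pi \colon \ket{\ba} \mapsto \ket{\ba_{\pi^{-1}}}$. It is precisely this inverse that makes $\pi \mapsto R_\pi$ a homomorphism rather than an anti-homomorphism: for the ``naive'' assignment $R'_\pi \colon \ket{\ba} \mapsto \ket{\ba_\pi}$, applying $R'_\sigma$ and then $R'_\pi$ sends $\ket{\ba}$ to $\ket{\ba_{\sigma\pi}}$, i.e.\ $R'_\pi R'_\sigma = R'_{\sigma\pi}$, which reverses the group order. So the whole argument amounts to careful bookkeeping of indices and of composition order.
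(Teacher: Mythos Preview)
Your verification is correct and complete. The paper itself states this lemma without proof, treating it as a standard preliminary fact from representation theory, so there is no paper proof to compare against; your direct check of unitarity and the homomorphism property on the computational basis is exactly the routine argument one would supply.
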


We note that the representation $((\C^d)^{\ot t}, R_{(\cdot)})$ can be decomposed into (isotypic) copies of the irreducible representations (or irreps) of the symmetric group $S_t$, which we denote by $\{(V_{\lambda}, R^{\lambda}_{(\cdot)})\}_{\lambda}$, where $\lambda \vdash t$ is a partition of $t$ (or Young diagram with at most $t$ boxes) and $V_\lambda$ are vector spaces called Specht modules.

\paragraph{Binary phase operators.} For a function $f: [d] \to \bit$, we define the binary phase operator 
\begin{align}\label{def:binaryphaseop}
F_f: \ket{x} \mapsto (-1)^{f(x)} \ket{x}.
\end{align} 
We will frequently consider uniformly random binary phase operators, which we will just write as $F$.

\paragraph{Haar measure and $t$-wise twirling.} We recall the definition of the Haar measure and $t$-wise twirl.

\begin{definition}[Haar measure]
The Haar measure is the unique left- and right-invariant probability measure on the unitary group $\unitary(d)$.
Throughout this paper, we denote sampling from the Haar measure over $\unitary(d)$ by $U \sim \HaarMeasure(d)$. If the dimension $d$ is clear from the context, we simply write $U \sim \HaarMeasure$.
\end{definition}

\begin{definition}[$t$-wise $R$-twirl] \label{def:twirl}
Let $R$ be a random unitary matrix sampled from some probability measure over the unitary group $\mathrm{U}(d)$.  We define the $t$-wise $R$-twirl as the following operator 
\begin{align*}
\momr{\cdot} = \E_{R} R^{\ot t} (\cdot) R^{\ot t, \dagger} \,.
\end{align*}
\end{definition}

When $R$ is sampled from the Haar measure over $\mathrm{U}(d)$, we refer to it as $t$-wise Haar twirl and denote it by $\momhaart{\cdot}$. We will also frequently consider the $t$-wise $\mompft{\cdot}$ twirl, where $R = PF$ is a product of a random permutation operator $P$ and a binary phase operator $F$, which are sampled independently.
We call this the permutation-phase twirl.

\subsection{Schur-Weyl duality}
Consider the representation $R_\pi$  of the symmetric group $S_t$ and the representation $U^{\ot t}$ of the unitary group $\mathrm{U}(d)$ on the vector space $(\C^d)^{\ot t}$. Schur-Weyl duality says that the irreducible sub-representations of these two representation fit together nicely. 
\begin{lemma}[{Schur-Weyl duality, see e.g.~\cite[Theorem 1.10]{christandl2006structure}}] \label{lem:schur-weyl}
The tensor product space $(\C^d)^{\ot t}$ can be decomposed as
\[
    (\C^d)^{\ot t} \cong \bigoplus_{\lambda \vdash t} P_{\lambda}~ \text{ with } P_{\lambda} = W_{\lambda} \ot V_{\lambda}
\]
where $\lambda \vdash t$ indexes partitions of $\{1, \dots, t\}$, which are commonly represented by Young diagrams.\footnote{We note that throughout this paper $d \gg t$, otherwise the Young diagrams need to be restricted to $d$ rows.}

The \emph{Weyl modules} $W_\lambda$ are irreducible subspaces for the unitary group $\mathrm{U}(d)$ and the \emph{Specht modules} $V_\lambda$ are irreducible subspaces for the symmetric group $S_t$. Consequently, the action of the product group $S_t \times U_d$ on $(\C^d)^{\ot t}$ decomposes as
\begin{align*}
    R_\pi  = \sum_{\lambda \vdash t} \ind_{W_{\lambda}} \ot R_\pi^{(\lambda)} \text{\hphantom{t} and \hphantom{t}} &  U^{\ot t}  = \sum_{\lambda \vdash t} U^{(\lambda)} \ot \ind_{V_{\lambda}}~,
\end{align*}
where $(W_\lambda, U^{(\lambda)})$ and $(V_\lambda, R_\pi^{(\lambda)})$ are irreducible representations of the unitary group $\unitary(d)$ and the symmetric group $S_t$, respectively. 
\end{lemma}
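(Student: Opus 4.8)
The plan is to obtain Schur--Weyl duality as a consequence of the structure theory of finite-dimensional $C^\ast$-algebras (the double commutant theorem), once the right pair of mutually commuting algebras has been identified. Set $V = \C^d$ and work inside $\linear(V^{\ot t})$. Let $\cB$ be the $\C$-linear span of $\{U^{\ot t} : U \in \unitary(d)\}$, and let $\cA = \{\sum_{\pi \in S_t} c_\pi R_\pi : c_\pi \in \C\}$ be the image of the group algebra $\C[S_t]$ under the representation $R_{(\cdot)}$. Both are unital subalgebras of $\linear(V^{\ot t})$ --- $\cB$ because $U^{\ot t} W^{\ot t} = (UW)^{\ot t}$, and $\cA$ because $R_\pi R_\sigma = R_{\pi\sigma}$ --- and both are closed under the adjoint, since $(U^{\ot t})^\dagger = (U^\dagger)^{\ot t}$ and $R_\pi^\dagger = R_{\pi^{-1}}$. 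Hence $\cA$ and $\cB$ are finite-dimensional $C^\ast$-algebras, so both are semisimple and the double commutant theorem applies to each.

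The crucial step is to show that $\cA$ and $\cB$ are one another's commutants inside $\linear(V^{\ot t})$, i.e.~$\cA' = \cB$. The inclusion $\cB \subseteq \cA'$ is immediate: conjugating a tensor-product operator $A_1 \ot \cdots \ot A_t$ by $R_\pi$ permutes the factors, so it fixes $U^{\ot t}$ for every $U$. For the reverse inclusion, identify $\linear(V^{\ot t}) \cong \linear(V)^{\ot t}$; under this identification, conjugation by $R_\pi$ becomes the permutation action of $S_t$ on the $t$ tensor factors, so an operator $X$ commutes with every $R_\pi$ precisely when the corresponding tensor lies in the symmetric subspace $\mathrm{Sym}^t\big(\linear(V)\big) \subseteq \linear(V)^{\ot t}$. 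By the polarization identity (valid in characteristic $0$), $\mathrm{Sym}^t\big(\linear(V)\big)$ is spanned by the pure powers $\{A^{\ot t} : A \in \linear(V)\}$. Finally $\mathrm{span}\{A^{\ot t} : A \in \linear(V)\} = \mathrm{span}\{U^{\ot t} : U \in \unitary(d)\} = \cB$: the inclusion ``$\supseteq$'' is trivial, and ``$\subseteq$'' follows because a linear functional on $\mathrm{Sym}^t(\linear(V))$ that vanishes on all $U^{\ot t}$ pulls back to a holomorphic polynomial of degree $\le t$ in the matrix entries vanishing on $\unitary(d)$, hence --- by Zariski density of $\unitary(d)$ in the matrix algebra --- vanishes identically, in particular on all $A^{\ot t}$. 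Thus $\cA' = \cB$, and taking commutants again, $\cB' = \cA$.

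With this in hand the decomposition falls out of the structure theorem for a semisimple algebra $\cA$ acting on $V^{\ot t}$ together with its commutant $\cB = \cA'$: there is an isomorphism $V^{\ot t} \cong \bigoplus_\lambda V_\lambda \ot W_\lambda$ under which $\cA$ acts as $\bigoplus_\lambda \big(\linear(V_\lambda) \ot \ind_{W_\lambda}\big)$ and $\cB$ acts as $\bigoplus_\lambda \big(\ind_{V_\lambda} \ot \linear(W_\lambda)\big)$, where $\lambda$ ranges over the irreducible representations of $\cA$ occurring in $V^{\ot t}$. Since $\cA \cong \C[S_t]$, the index set is the set of partitions $\lambda \vdash t$ (all of which occur because $d \gg t$), and the $V_\lambda$ are the corresponding Specht modules, irreducible for $S_t$. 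Dually, $\cB$ acts as the \emph{full} matrix algebra $\linear(W_\lambda)$ on each multiplicity space, and since $\cB$ is generated by $\{U^{\ot t}\}$ this says exactly that each $W_\lambda$ is an irreducible $\unitary(d)$-representation --- the Weyl module --- with inequivalent $\lambda$ giving inequivalent $W_\lambda$. Writing $R_\pi^{(\lambda)}$ and $U^{(\lambda)}$ for the operators induced on $V_\lambda$ and $W_\lambda$, the block form above reads $R_\pi = \sum_\lambda \ind_{W_\lambda} \ot R_\pi^{(\lambda)}$ and $U^{\ot t} = \sum_\lambda U^{(\lambda)} \ot \ind_{V_\lambda}$ after reordering the tensor factors to match the convention $P_\lambda = W_\lambda \ot V_\lambda$, which is the claim.

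The main obstacle is the identification $\cA' = \cB$ --- concretely the polarization step, that $t$-th tensor powers span the symmetric subspace of $\linear(V)^{\ot t}$, and the density argument linking arbitrary matrices to unitaries. Everything afterwards is a formal consequence of the double commutant theorem. One further point that should be flagged is the hypothesis $d \geq t$, which is what guarantees that every partition $\lambda \vdash t$ genuinely appears in $V^{\ot t}$; for $d < t$ one must restrict to partitions with at most $d$ rows.
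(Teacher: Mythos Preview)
The paper does not prove this lemma at all; it is stated with a citation to \cite[Theorem 1.10]{christandl2006structure} and used as a black box throughout. So there is no ``paper's own proof'' to compare against.

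That said, your argument is a correct and standard derivation of Schur--Weyl duality via the double commutant theorem. The identification $\cA' = \cB$ is the heart of the matter, and both steps you give are valid: the polarization identity does show that $\mathrm{Sym}^t(\linear(V))$ is spanned by pure tensor powers $A^{\ot t}$, and the Zariski-density argument (equivalently, Weyl's unitarian trick: the Zariski closure of $\unitary(d)$ in $M_d(\C)$ is all of $\mathrm{GL}_d(\C)$, since its complex Lie algebra must contain $\mathfrak{u}(d)_\C = \mathfrak{gl}_d(\C)$) does let you pass from arbitrary $A$ to unitary $U$. One small point worth making explicit: when you write $\cA \cong \C[S_t]$ you are implicitly using that the representation $R_{(\cdot)}$ is faithful on the group algebra, which is exactly the statement that every Specht module appears in $V^{\ot t}$; this holds precisely when $d \geq t$, as you note at the end. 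For $d < t$ the map $\C[S_t] \to \cA$ has a kernel and the index set shrinks to partitions with at most $d$ rows, but the rest of the argument goes through unchanged.
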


We denote the specific basis which block-diagonalizes all the above operations as the Schur-Weyl basis.

\begin{definition}[Schur-Weyl basis]
Let $\{\ket{w_{\lambda, i}}\}_{i}$ and $\{\ket{v_{\lambda, j}}\}_j$ be orthonormal bases of $W_\lambda$ and $V_\lambda$, respectively.
Then we call 
\begin{align*}
\{\ket{w_{\lambda, i}} \ot \ket{v_{\lambda, j}}\}_{\lambda, i, j}
\end{align*}
a Schur-Weyl basis of $(\C^d)^{\ot t}$ (where we interpret each vector $\ket{w_{\lambda, i}} \ot \ket{v_{\lambda, j}} \in P_\lambda$ as a $(\C^d)^{\ot t}$-vector in the natural way).
\end{definition}

The following decomposition of the distinct subspace projector $\Lambda$ easily follows from Schur-Weyl duality: since $\Lambda$ is invariant under permutation of the tensor factors (i.e.~$R_{\pi} \Lambda = \Lambda R_{\pi}$ for all $\pi \in S_t$), it acts as an identity on the Specht modules $V_{\lambda}$ by Schur's lemma. The same decomposition also holds for any permutation-invariant operator but we shall only need the following.

\begin{lemma}[Decomposition of the distinct subspace projector] \label{lem:perm_inv_sw}
Let $\Lambda \in \linear((\C^d)^{\ot t})$ be the projector on the tuples of distinct strings defined in \Cref{def:distinct}. 
Then, 
\begin{align*}
\Lambda = \sum_{\lambda \vdash t} \Lambda^{(\lambda)}_{W_\lambda} \ot \id_{V_\lambda} \,,
\end{align*}
with each $\Lambda^{(\lambda)}_{W_{\lambda}}$ a projector on a subspace of $W_\lambda$. 
\end{lemma}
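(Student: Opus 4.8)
The plan is to use Schur's lemma, exactly as the paragraph preceding the statement suggests. The key structural fact is that $\Lambda$ commutes with the $S_t$-action $R_\pi$ on $(\C^d)^{\ot t}$: indeed, $R_\pi$ permutes basis vectors $\ket{\bx} \mapsto \ket{\bx_{\pi^{-1}}}$, and a tuple $\bx$ is distinct if and only if $\bx_{\pi^{-1}}$ is distinct, so $R_\pi$ preserves the span of distinct basis vectors and its complement; hence $R_\pi \Lambda = \Lambda R_\pi$ for every $\pi \in S_t$. Thus $\Lambda$ is an intertwiner for the representation $((\C^d)^{\ot t}, R_{(\cdot)})$.

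Next I would invoke Schur-Weyl duality (\Cref{lem:schur-weyl}) to write $(\C^d)^{\ot t} \cong \bigoplus_{\lambda \vdash t} W_\lambda \ot V_\lambda$, on which $R_\pi = \sum_\lambda \ind_{W_\lambda} \ot R_\pi^{(\lambda)}$ with the $V_\lambda$ pairwise non-isomorphic irreducible $S_t$-modules. By Schur's lemma, any operator commuting with this representation must be block-diagonal across the $\lambda$-blocks (no intertwiners between distinct irreps) and, within the block indexed by $\lambda$, must act as $A_\lambda \ot \id_{V_\lambda}$ for some operator $A_\lambda$ on $W_\lambda$ (since the only operators on $V_\lambda$ commuting with the irreducible action are scalars). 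Therefore $\Lambda = \sum_{\lambda \vdash t} A_\lambda \ot \id_{V_\lambda}$ for some $A_\lambda \in \linear(W_\lambda)$.

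It remains to check that each $A_\lambda$ is itself an orthogonal projector. This is immediate from the properties of $\Lambda$: since $\Lambda = \Lambda^\dagger$, we get $A_\lambda^\dagger \ot \id_{V_\lambda} = A_\lambda \ot \id_{V_\lambda}$ and hence $A_\lambda = A_\lambda^\dagger$; since $\Lambda^2 = \Lambda$, comparing blocks gives $A_\lambda^2 \ot \id_{V_\lambda} = A_\lambda \ot \id_{V_\lambda}$, so $A_\lambda^2 = A_\lambda$. A self-adjoint idempotent is an orthogonal projection, so writing $\Lambda^{(\lambda)}_{W_\lambda} \deq A_\lambda$ and letting it denote the projector onto its image $\subseteq W_\lambda$ yields the claimed decomposition $\Lambda = \sum_{\lambda \vdash t} \Lambda^{(\lambda)}_{W_\lambda} \ot \id_{V_\lambda}$.

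There is no real obstacle here; the only point requiring a little care is making the application of Schur's lemma rigorous across the full direct-sum decomposition — i.e.\ arguing that an intertwiner has no off-diagonal components between the blocks $P_\lambda$ and $P_{\mu}$ for $\lambda \ne \mu$ (because $V_\lambda \not\cong V_\mu$ as $S_t$-representations, any intertwiner between them vanishes) and that on the diagonal block it has the product form. This is a standard consequence of Schur's lemma applied to isotypic decompositions, and for $d \gg t$ (as assumed throughout) every $\lambda \vdash t$ genuinely appears, so no block is empty.
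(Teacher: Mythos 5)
Your proof is correct and takes the same route the paper sketches in the paragraph preceding the lemma: $\Lambda$ commutes with the $S_t$-action $R_\pi$, so Schur--Weyl duality plus Schur's lemma force the form $\sum_\lambda A_\lambda \ot \id_{V_\lambda}$, and the projector axioms for $\Lambda$ descend to each $A_\lambda$. You have simply spelled out the details the paper leaves implicit.
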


We will also need the following relation between the dimensions of the Weyl and Specht modules.
\begin{lemma}[\cite{christandl2006structure}, Theorem 1.16] \label{lem:dim_weyl_specht}
The dimensions of the Weyl and Specht modules $W_\lambda$ and $V_\lambda$ satisfy
\[
    \dim(W_\lambda) = \frac{\dim(V_\lambda)}{t!} \prod_{(i,j) \in \lambda} (d + j - i) \,,
\]
where $(i,j)$ denotes the row and column number of a box in the Young diagram corresponding to $\lambda$. 
\end{lemma}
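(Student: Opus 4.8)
The plan is to obtain the identity by combining two classical product formulas: the \emph{hook--content formula} for the dimension of the Weyl module $W_\lambda$, and the \emph{hook length formula} for the dimension of the Specht module $V_\lambda$. For a box $(i,j)$ in the Young diagram of $\lambda$ (row $i$, column $j$), write $h_\lambda(i,j) = (\lambda_i - j) + (\lambda'_j - i) + 1$ for its hook length, where $\lambda'$ is the conjugate partition; note that $j - i$ is precisely its content. Since $d \gg t$, every partition $\lambda \vdash t$ has at most $d$ rows, so all of the $\unitary(d)$-representation statements below apply without restriction.

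First I would use the standard fact---a consequence of Schur--Weyl duality (\Cref{lem:schur-weyl})---that the character of the Weyl module $W_\lambda$, as an irreducible representation of $\unitary(d)$, is the Schur polynomial $s_\lambda$ evaluated at the eigenvalues of the unitary. Indeed, writing $U^{\ot t} = \sum_\lambda U^{(\lambda)} \ot \ind_{V_\lambda}$ and taking the trace gives $(\mathrm{tr}\, U)^t = \sum_\lambda \mathrm{tr}(U^{(\lambda)})\,\dim(V_\lambda)$, and comparing with the Schur expansion $(\mathrm{tr}\, U)^t = \sum_{\lambda \vdash t} \dim(V_\lambda)\, s_\lambda$ of the eigenvalues identifies $\mathrm{tr}(U^{(\lambda)}) = s_\lambda$. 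Evaluating at $U = \id$ gives $\dim(W_\lambda) = s_\lambda(\underbrace{1,\dots,1}_{d})$, and the hook--content formula for the principal specialization of a Schur polynomial states
\[
    s_\lambda(\underbrace{1,\dots,1}_{d}) = \prod_{(i,j)\in\lambda} \frac{d + j - i}{h_\lambda(i,j)}\,.
\]
Next I would invoke the Frame--Robinson--Thrall hook length formula $\dim(V_\lambda) = t!\,/\prod_{(i,j)\in\lambda} h_\lambda(i,j)$, equivalently $\prod_{(i,j)\in\lambda} h_\lambda(i,j) = t!/\dim(V_\lambda)$. Substituting this into the denominator above yields
\[
    \dim(W_\lambda) = \frac{\prod_{(i,j)\in\lambda}(d+j-i)}{\prod_{(i,j)\in\lambda} h_\lambda(i,j)} = \frac{\dim(V_\lambda)}{t!}\prod_{(i,j)\in\lambda}(d+j-i)\,,
\]
which is the claimed identity.

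Both product formulas are classical and can simply be cited (see e.g.~\cite{christandl2006structure} and standard references on symmetric functions), in which case the proof reduces to the one-line substitution above and there is no real obstacle. If one instead wanted a fully self-contained argument, the only substantive step is the hook--content formula itself: one starts from the bialternant (Weyl-dimension) expression $s_\lambda(1^d) = \prod_{1 \le i < j \le d} \frac{\lambda_i - \lambda_j + j - i}{j - i}$ and rearranges its factors to match the arm-plus-leg-plus-one decomposition of the hook lengths and the contents $d + (j-i)$. I expect this to be pure bookkeeping rather than a conceptual difficulty, so the cleanest exposition is to cite the two formulas and perform the substitution.
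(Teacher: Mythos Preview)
The paper does not give its own proof of this lemma; it simply cites it as Theorem~1.16 of \cite{christandl2006structure}. Your proposal is correct and is in fact the standard derivation of this identity: combine the hook--content formula for $\dim(W_\lambda)=s_\lambda(1^d)$ with the Frame--Robinson--Thrall hook length formula for $\dim(V_\lambda)$, and the hook-length products cancel. There is nothing to compare against here, and no gap in your argument.
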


The following lemma follows from the standard formula for computing the projector on isotypic copies of an irreducible subspace (see~\cite[Section 2.4]{fulton2013representation}). Here we apply it to the representation of the symmetric group $S_t$  over $(\C^d)^{\ot t}$, where Schur-Weyl duality implies that the subspace of all isotypic copies of the Specht modules $V_{\lambda}$ is exactly $P_\lambda = W_\lambda \ot V_\lambda$.

\begin{lemma}\label{lem:P_characters}
The projection onto the subspace $P_\lambda$ is given by 
\begin{equation}
\label{eq:isotypical-projection}
    \ind_{P_\lambda} = \frac{\dim(V_\lambda)}{t!} \sum_{\pi \in S_t} \chi_\lambda(\pi^{-1}) R_\pi \,,
\end{equation}
where $\chi_\lambda(\cdot) = \Tr[R^{\lambda}_{(\cdot)}]$ is the character corresponding to the irrep $(V_{\lambda}, R^{(\lambda)}_{(\cdot)})$.
\end{lemma}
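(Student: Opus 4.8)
The plan is to prove \Cref{lem:P_characters} directly as an instance of the isotypic‑projection formula for finite groups, specialized to $S_t$ acting on $(\C^d)^{\ot t}$, and then to identify the relevant isotypic component via Schur--Weyl duality. Define the candidate operator
\[
Q \deq \frac{\dim(V_\lambda)}{t!} \sum_{\pi \in S_t} \chi_\lambda(\pi^{-1})\, R_\pi \in \linear\big((\C^d)^{\ot t}\big),
\]
and recall that $\chi_\lambda$ is a class function, i.e.~$\chi_\lambda(\sigma\pi\sigma^{-1}) = \chi_\lambda(\pi)$ for all $\sigma,\pi \in S_t$. The first step is to show that $Q$ is $S_t$-equivariant: conjugating the defining sum by $R_\sigma$, using that $R_{(\cdot)}$ is a unitary representation, and reindexing $\pi \mapsto \sigma^{-1}\pi\sigma$ together with the class‑function property gives $R_\sigma Q R_\sigma^{\dagger} = Q$ for every $\sigma \in S_t$.

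Next I would pass to the Schur--Weyl decomposition $(\C^d)^{\ot t} \cong \bigoplus_{\mu \vdash t} P_\mu$ with $P_\mu = W_\mu \ot V_\mu$ from \Cref{lem:schur-weyl}, on which $R_\pi$ acts as $\id_{W_\mu} \ot R^{(\mu)}_\pi$. Hence the restriction of $Q$ to $P_\mu$ has the form $\id_{W_\mu} \ot B_\mu$, where $B_\mu = \frac{\dim(V_\lambda)}{t!}\sum_{\pi \in S_t} \chi_\lambda(\pi^{-1})\, R^{(\mu)}_\pi \in \linear(V_\mu)$. The same class‑function computation shows $B_\mu$ commutes with all $R^{(\mu)}_\sigma$, so since $(V_\mu, R^{(\mu)}_{(\cdot)})$ is irreducible, Schur's lemma forces $B_\mu = c_\mu \id_{V_\mu}$ for a scalar $c_\mu$. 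Taking traces and using $\chi_\lambda(\pi^{-1}) = \overline{\chi_\lambda(\pi)}$ together with the first orthogonality relation $\frac{1}{t!}\sum_{\pi \in S_t} \overline{\chi_\lambda(\pi)}\,\chi_\mu(\pi) = \delta_{\lambda\mu}$ for irreducible characters of $S_t$, I get $c_\mu \dim(V_\mu) = \dim(V_\lambda)\,\delta_{\lambda\mu}$, hence $c_\mu = \delta_{\lambda\mu}$. Therefore $Q$ acts as the identity on $P_\lambda$ and as zero on every $P_\mu$ with $\mu \neq \lambda$, which means $Q = \ind_{P_\lambda}$, exactly \eqref{eq:isotypical-projection}.

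I do not expect a genuine obstacle here — this is the standard argument behind the isotypic‑projector formula referenced from \cite{fulton2013representation} — but two bookkeeping points deserve care. First, one should note that $R_{(\cdot)}$ permutes the computational basis of $(\C^d)^{\ot t}$ and is therefore unitary, so $\ind_{P_\lambda}$ is genuinely the \emph{orthogonal} projection and Schur's lemma applies as stated. Second, one should confirm that the $V_\lambda$‑isotypic component of $(\C^d)^{\ot t}$ under the $S_t$‑action is precisely the block $P_\lambda = W_\lambda \ot V_\lambda$ appearing in \Cref{lem:schur-weyl}; this is immediate from the stated decomposition $R_\pi = \sum_{\mu} \ind_{W_\mu} \ot R^{(\mu)}_\pi$. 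Finally, since every irreducible representation of $S_t$ can be realized over $\R$, the factor $\chi_\lambda(\pi^{-1})$ is equal to $\chi_\lambda(\pi)$; I would nonetheless keep the $\pi^{-1}$ form, as it matches the general complex formula and makes the orthogonality step transparent.
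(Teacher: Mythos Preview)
Your proposal is correct and is precisely the standard isotypic-projector argument that the paper invokes by citation: the paper does not give its own proof of \Cref{lem:P_characters} but simply refers to \cite[Section 2.4]{fulton2013representation} for the general formula and observes via Schur--Weyl duality that the $V_\lambda$-isotypic component equals $P_\lambda$. Your write-up supplies exactly those details (equivariance of $Q$, Schur's lemma on each $V_\mu$, and character orthogonality to pin down the scalar), so it matches the paper's intended approach.
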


We will also need (a special case of) the standard Schur orthogonality relations for matrix coefficients (see~\cite{Bump}, Theorems 2.3 and 2.4). These relations say that if we express unitary irreducible representations of a group in any basis, then the different matrix entries are orthogonal under an inner product obtained by averaging over the group. Here we specialize the above to the Schur-Weyl basis and the unitary irreducible sub-representations $R^{\lambda}_{(\cdot)}$ of $R_{(\cdot)}$ as given in \Cref{lem:schur-weyl}.
\begin{lemma}[Schur orthogonality relations] \label{lem:orthogonality}
    Let $(V_{\lambda}, R^{\lambda}_{(\cdot)}), (V_{\lambda'}, R^{\lambda'}_{(\cdot)})$ be two irreducible representations of the symmetric group $S_t$. Then,
    \[ \E_{\pi \in S_t}\left[ \bra{v_{\lambda, i}} R^{\lambda}_{\pi}  \ket{v_{\lambda, j}}  \overline{\bra{v_{\lambda', k}} R^{\lambda'}_{\pi}  \ket{v_{\lambda', \ell}}} \right] = \frac{1}{\dim(V_{\lambda})} \delta_{\lambda, \lambda'} \delta_{i,k} \delta_{j,\ell}\,. \]
\end{lemma}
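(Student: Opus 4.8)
The plan is to deduce the orthogonality relations from Schur's lemma applied to the irreducible unitary representations $(V_\lambda, R^\lambda_{(\cdot)})$ and $(V_{\lambda'}, R^{\lambda'}_{(\cdot)})$ of the finite group $S_t$. First I would recall the needed form of Schur's lemma: if $\phi \colon V_{\lambda'} \to V_\lambda$ is an intertwiner, i.e.\ $R^\lambda_\pi \phi = \phi R^{\lambda'}_\pi$ for all $\pi \in S_t$, then $\phi = 0$ whenever $\lambda \neq \lambda'$ (distinct partitions of $t$ index inequivalent irreps of $S_t$), while $\phi = c \cdot \id_{V_\lambda}$ for some scalar $c \in \C$ whenever $\lambda = \lambda'$ --- here one uses that $\C$ is algebraically closed, so $\phi$ has an eigenvalue $c$ and $\phi - c\,\id_{V_\lambda}$ is then a non-invertible intertwiner from an irrep to itself, hence zero.

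Next, for an arbitrary linear map $X \colon V_{\lambda'} \to V_\lambda$ I would form its group average $\bar X \deq \E_{\pi \in S_t}\, R^\lambda_\pi\, X\, (R^{\lambda'}_\pi)^{-1}$. A one-line substitution $\pi \mapsto \sigma\pi$ in the uniform average, together with $(R^{\lambda'}_{\sigma\pi})^{-1} = (R^{\lambda'}_\pi)^{-1} R^{\lambda'}_\sigma$, shows that $R^\lambda_\sigma \bar X = \bar X R^{\lambda'}_\sigma$ for every $\sigma \in S_t$, so $\bar X$ is an intertwiner. Schur's lemma then gives $\bar X = 0$ if $\lambda \neq \lambda'$, and $\bar X = c\,\id_{V_\lambda}$ if $\lambda = \lambda'$; in the latter case I would take traces and use cyclicity together with $\Tr[R^\lambda_\pi X (R^\lambda_\pi)^{-1}] = \Tr[X]$ to get $c = \Tr[X]/\dim(V_\lambda)$.

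Finally I would specialize $X$ to the rank-one operator $X = \ketbra{v_{\lambda,j}}{v_{\lambda',\ell}}$ and read off the $(i,k)$ matrix element of the resulting identity for $\bar X$. On one side: when $\lambda \neq \lambda'$ we have $\bar X = 0$, and when $\lambda = \lambda'$ we have $\Tr[X] = \braket{v_{\lambda,\ell}}{v_{\lambda,j}} = \delta_{j,\ell}$, so $\bra{v_{\lambda,i}} \bar X \ket{v_{\lambda,k}} = \frac{1}{\dim(V_\lambda)}\delta_{i,k}\delta_{j,\ell}$ --- in both cases this equals $\frac{1}{\dim(V_\lambda)}\delta_{\lambda,\lambda'}\delta_{i,k}\delta_{j,\ell}$, the right-hand side of the lemma. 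On the other side, $\bra{v_{\lambda,i}} \bar X \ket{v_{\lambda',k}} = \E_{\pi}\, \bra{v_{\lambda,i}} R^\lambda_\pi \ket{v_{\lambda,j}}\, \bra{v_{\lambda',\ell}}(R^{\lambda'}_\pi)^{-1}\ket{v_{\lambda',k}}$, and here I would use that $R^{\lambda'}_{(\cdot)}$ is \emph{unitary} (it is the restriction of the unitary representation $R_{(\cdot)}$ of \Cref{lem:schur-weyl} to the invariant subspace $V_{\lambda'}$), so $(R^{\lambda'}_\pi)^{-1} = (R^{\lambda'}_\pi)^\dagger$ and therefore $\bra{v_{\lambda',\ell}}(R^{\lambda'}_\pi)^{-1}\ket{v_{\lambda',k}} = \overline{\bra{v_{\lambda',k}} R^{\lambda'}_\pi \ket{v_{\lambda',\ell}}}$. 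Substituting this in reproduces exactly the left-hand side in the statement. The main obstacle --- a mild one --- is the appeal to algebraic closedness of $\C$ inside Schur's lemma and carefully tracking the complex conjugation that unitarity introduces; everything else is routine bookkeeping.
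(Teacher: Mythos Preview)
Your proof is correct and is precisely the standard textbook argument for the Schur orthogonality relations. The paper does not actually prove this lemma: it is stated in the preliminaries with a citation to Bump (Theorems 2.3 and 2.4) and a one-sentence informal description, so there is no ``paper's own proof'' to compare against. What you have written is essentially the proof one finds in Bump or Fulton--Harris: average an arbitrary linear map over the group to produce an intertwiner, apply Schur's lemma, compute the scalar via the trace, then specialize to a rank-one $X$ and use unitarity to turn the inverse into a complex conjugate.

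One cosmetic slip: in the intertwiner verification you write $(R^{\lambda'}_{\sigma\pi})^{-1} = (R^{\lambda'}_\pi)^{-1} R^{\lambda'}_\sigma$, but in fact $(R^{\lambda'}_{\sigma\pi})^{-1} = (R^{\lambda'}_\pi)^{-1}(R^{\lambda'}_\sigma)^{-1}$; the desired identity $R^\lambda_\sigma \bar X = \bar X R^{\lambda'}_\sigma$ then follows after moving $(R^{\lambda'}_\sigma)^{-1}$ to the other side. This is a typo rather than a gap --- your stated conclusion is correct and the rest of the argument goes through unchanged.
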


\section{Proof of non-adaptive security}
\label{sec:def-construction}

In this section, we first give a formal definition of PRUs, as proposed by Ji, Liu, and Song~\cite{ji2018pseudorandom}. Then, we prove that our construction in \Cref{eq:intro-construction} is a non-adaptive pseudorandom unitary.

\begin{definition}[Pseudorandom unitary]\label{def:PRU} Let $n \in \N$ be the security parameter. An infinite sequence $\mathcal{U} = \{\mathcal{U}_n\}_{n \in \N}$ of $n$-qubit unitary ensembles $\mathcal{U}_n = \{U_k\}_{k \in \mathcal{K}}$ is a pseudorandom unitary if it satisfies the following conditions.
\begin{itemize}
    \item (Efficient computation) There exists a polynomial-time quantum algorithm $\mathcal{Q}$ such that for all keys $k \in \mathcal{K}$, where $\mathcal{K}$ denotes the key space, and any $\ket{\psi} \in (\C^2)^{\ot n}$, it holds that
    $$
    \mathcal{Q}(k,\ket{\psi}) = U_k \ket{\psi}\,.
    $$

    \item (Pseudorandomness) The unitary $U_k$, for a random key $k \sim \algo K$, is computationally indistinguishable from a Haar random unitary $U \sim \HaarMeasure(2^n)$. In other words, for any quantum polynomial-time (QPT) algorithm $\algo A$, it holds that $$
    \vline\, \underset{k \sim \algo K}{\Pr}[\algo A^{U_k}(1^\lambda)=1] - \underset{U \sim \Haar}{\Pr}[\algo A^{U}(1^\lambda) =1]  \,\vline \,\leq \, \negl(n)\,.
    $$ 
We call $\mathcal{U} = \{\mathcal{U}_n\}_{n \in \N}$ a non-adaptive pseudorandom unitary if $\algo A$ is only allowed to make parallel queries to the unitary $U_k$ (or $U$ in the Haar random case).
\end{itemize}
Note that, whenever we write $\mathcal{U}_n = \{U_k\}_{k \in \mathcal{K}}$, it is implicit that the key space $\mathcal{K}$ depends on the security parameter $n \in \N$, and that the length of each key $k \in \mathcal{K}$ is polynomial in $n$.
\end{definition}

Our main result is that the construction in \Cref{eq:intro-construction} is indeed a non-adaptive PRU.

\begin{theorem} 
Let $n \in \N$ be the security parameter. Then, the ensemble
$\mathcal{U}_n = \{U_k\}_{k \in \mathcal{K}}$ of $n$-qubit unitary operators defined in \Cref{eq:intro-construction} is a non-adaptive pseudorandom unitary when instantiated with ensembles of $n$-bit (quantum-secure) PRFs and PRPs.
\end{theorem}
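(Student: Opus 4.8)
The plan is to establish the stronger \emph{information-theoretic} statement in \cref{eqn:inf-security}, i.e.~that the $t$-wise $PFC$-twirl and the $t$-wise Haar twirl of any pure state $\ket\psi_{\reg{A}_1\cdots\reg{A}_t\reg B}$ are $O(t^2/d)$-close in trace distance, and then invoke the security of the PRP and PRF to swap the truly random objects for their pseudorandom counterparts, incurring only a negligible loss. The computational step is routine: a distinguisher against \cref{eqn:security} that succeeds with non-negligible advantage yields (by a hybrid argument over the three keys) either a PRP-distinguisher or a PRF-distinguisher making polynomially many parallel (hence also adaptive) queries, contradicting quantum security of the underlying primitives; one just has to be careful that the reduction can simulate the remaining two uniformly-random-looking operators, which it can, since a random Clifford is efficiently sampleable and the PRP/PRF oracles can be queried in parallel on $t=\poly(n)$ registers. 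So the heart of the matter is the information-theoretic claim.

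For the information-theoretic claim I would proceed in the order sketched in the proof overview. First, compute the Haar twirl exactly: by Schur--Weyl duality (\Cref{lem:schur-weyl}) and Schur's lemma, averaging $U^{\ot t}(\cdot)U^{\ot t,\dagger}$ over the Haar measure projects each Weyl module $W_\lambda$ onto the maximally mixed state, giving
\[
\momhaart{\proj\psi}=\sum_{\lambda\vdash t}\frac{\ind_{W_\lambda}}{\Tr[\ind_{W_\lambda}]}\ot\Tr_{W_\lambda}\!\big[\ind_{P_\lambda}\proj\psi\,\ind_{P_\lambda}\big].
\]
Second, use a random Clifford to reduce to the distinct subspace: since the Clifford group is a unitary $2$-design (and, more relevantly here, acts transitively enough on computational basis tuples), one shows $\Tr[\Lambda\,\E_C C^{\ot t}\proj\psi C^{\ot t,\dagger}]\ge 1-O(t^2/d)$ by a birthday-type count of collisions; by gentle measurement this lets us replace $\ket\psi$ by its (renormalised) projection $\Lambda\ket\psi$ onto the distinct subspace at cost $O(t/\sqrt d)$ in trace distance — actually one should track this carefully, perhaps it is cleaner to keep the Clifford in the twirl and bound the contribution of the non-distinct part directly. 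Third, compute the $PF$-twirl on a distinct-subspace state: here I would expand $P^{\ot t}$ and $F^{\ot t}$ in the computational basis, take the expectation over the random function $f$ (which kills all terms except those matched up by a permutation, because on distinct tuples $\E_f\prod_i(-1)^{f(x_i)+f(y_i)}$ is nonzero only when $\{x_i\}=\{y_i\}$ as sets), and then take the expectation over the random permutation $P$; using \Cref{lem:P_characters} and the Schur orthogonality relations (\Cref{lem:orthogonality}) this collapses to
\[
\E_{PF}(PF)^{\ot t}\proj\psi(PF)^{\ot t,\dagger}=\sum_{\lambda\vdash t}\frac{\ind_{\Lambda^{(\lambda)}}}{\Tr[\ind_{\Lambda^{(\lambda)}}]}\ot\Tr_{W_\lambda}\!\big[\ind_{P_\lambda}\proj\psi\,\ind_{P_\lambda}\big],
\]
where $\Lambda^{(\lambda)}\subseteq W_\lambda$ is the Weyl-module component of the distinct-subspace projector from \Cref{lem:perm_inv_sw}. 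Fourth, compare dimensions: from \Cref{lem:dim_weyl_specht} together with the hook-length/standard-tableau count for $\dim(V_\lambda)$ and $\Tr[\Lambda]=d!/(d-t)!$, show $\Tr[\ind_{\Lambda^{(\lambda)}}]/\Tr[\ind_{W_\lambda}]=1-O(t^2/d)$, hence the mixed state on $\Lambda^{(\lambda)}$ is $O(t^2/d)$-close to maximally mixed on $W_\lambda$; summing over $\lambda$ (the second tensor factors are identical, so the distances just combine convexly) and folding in the Clifford reduction gives the final $O(t^2/d)$ bound.

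I expect the main obstacle to be the third step — pinning down that $\E_{PF}(PF)^{\ot t}$ applied to a distinct-subspace state yields exactly $\bigoplus_\lambda(\text{maximally mixed on }\Lambda^{(\lambda)})\ot(\text{same second factor as Haar})$. The subtlety is that $P$ and $F$ individually are far from forming a design, and it is only the interplay — $F$ restricting to permutation-matched terms on the distinct subspace, followed by $P$ symmetrising those terms into a genuine average over $S_t$ that Schur orthogonality can act on — that recovers the Weyl-module structure. Getting the bookkeeping right (which basis tuples survive, why the off-diagonal $\lambda\ne\lambda'$ blocks vanish, and why the surviving operator is proportional to a \emph{projector} $\Lambda^{(\lambda)}$ rather than some other positive operator) is where the real work lies; everything else is either standard representation theory already packaged in the preliminaries, or a routine birthday bound.
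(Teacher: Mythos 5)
Your overall plan is the paper's: hybridize to the truly random $PFC$-twirl, compute both the Haar and $PF$ twirls exactly in the Schur--Weyl basis, and bound the trace distance via the dimension ratio $\Tr[\Lambda^{(\lambda)}_{W_\lambda}]/\Tr[\id_{W_\lambda}] = 1 - O(t^2/d)$. The two exact twirl formulas you display are precisely the paper's \Cref{lem:haar-twirl} and \Cref{lem:pf-twirl}, and your step four is \Cref{clm:distinct-trace} together with \Cref{cm:ratio}. (A small point: the hook-length count for $\dim(V_\lambda)$ is never needed, since $\dim(V_\lambda)$ cancels between \Cref{lem:dim_weyl_specht} and \Cref{clm:distinct-trace}.)

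The one genuine gap --- which you flag yourself (``one should track this carefully'') --- is how the Clifford layer lets you pass to a distinct-subspace state \emph{simultaneously} in both the Haar and $PF$ twirls, so that the second tensor factors $\Tr_{W_\lambda}[\ind_{P_\lambda}(\cdot)\ind_{P_\lambda}]$ in your two displayed formulas are literally the same operator. As written you propose to ``replace $\ket\psi$ by its (renormalised) projection $\Lambda\ket\psi$,'' but that move is not available: $\ket\psi$ is the adversary's input and may have zero overlap with the distinct subspace (e.g.~$\ket{0}^{\ot t}$). What actually lands near the distinct subspace is the Clifford-twirled state $\xi := \E_{C} C^{\ot t}\proj\psi C^{\ot t,\dagger}$, and $\xi$ appears in $\rho^{\rm fr} = \mompft{\xi}$ but not, a priori, in $\rho^{\rm hr} = \momhaart{\proj\psi}$. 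The observation that closes the gap is Haar invariance: since $UC$ is again Haar distributed, $\momhaart{\proj\psi} = \momhaart{\xi}$, so both sides are twirls of the \emph{same} state $\xi$. One then applies the gentle-measurement lemma to replace $\xi$ by the renormalised $\Lambda\xi\Lambda$ at cost $O(t/\sqrt{d})$, takes a purification of this mixed state (still supported on the distinct subspace of $\reg{A}$) so that the exact twirl lemmas apply as you stated them for pure states, and discards the purifying register at the end via contractivity of the trace norm under partial trace. Without the Haar-invariance step the two sides would be twirls of different inputs, and the matching of second tensor factors on which your bound depends would fail.
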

\begin{proof} 
From the construction, it is clear that a random unitary $U_k$ from the above family can be sampled efficiently (see e.g.~\cite{berg2021simple} for simple way to sample a uniform Clifford unitary).

We now show the non-adaptive security of the above PRU family. A non-adaptive quantum adversary starts with an initial state $\ket{\psi}_{\reg{A} \reg{E}}$, where register  $\reg A \cong ((\C^2)^{\ot n})^{\ot t}$ is on $nt$ qubits, and $\reg{E}$ is an arbitrary workspace register. The algorithm then applies the unitary $U_k^{\otimes t}$ on the register $A$ and performs a measurement afterwards. For the rest of the proof, we set $d=2^n$ to be the dimension of each input register. In order to argue security, it suffices to show that the density matrices
\[ \rho:= \E_{k \in \cK} (U_k)^{\ot t}_{\reg A} \proj{\psi}_{\reg{AE}} (U_k)^{\ot t, \dagger}_{\reg A} \text{\hphantom{t} and } \rho^{\rm hr}: = \E_{U \sim \Haar} U^{\ot t}_{\reg A} \proj{\psi}_{\reg{AE}} U^{\ot t, \dagger}_{\reg A}, \]
are computationally indistinguishable with at most negligible advantage.

From the post-quantum security of the PRF and PRP families used in \Cref{eq:intro-construction}, it follows immediately that if we replace the pseudorandom permutation and function with their fully random counterparts, then the states are computationally indistinguishable up to negligible advantage in $n$. Consequently, let $P$ denote a uniformly random permutation operator on $n$-qubits as defined in \Cref{def:permutationop}, let $F$ be a uniformly random binary phase operator as in \Cref{def:binaryphaseop} and let $C$ be a uniformly random $n$-qubit Clifford, all sampled independently. Then,
it suffices to argue that the following ``fully random'' density matrix 
\begin{align*}
\rho^{\rm fr} \deq \E_{F, P, C} (P F C)^{\ot t}_{\reg A} \proj{\psi}_{\reg{AE}} (P F C)^{\ot t, \dagger}_{\reg A} 
\end{align*}
is close in trace distance to $\rho^{\rm hr}$. Recalling the definition of $t$-wise $R$-twirl operator (acting on the register $\reg{A}$, which we shall omit from the notation henceforth), our goal is to bound the following trace distance:
\begin{align}\label{eqn:td}
    \ \|\rho^{\rm fr} - \rho^{\rm hr}\|_1 = \left\|\mompft{\underbrace{\E_{C} C^{\ot t}_{\reg A} \proj{\psi}_{\reg{AE}} C^{\ot t, \dagger}_{\reg A} }_{:=\xi_{\reg{AE}}}} - \momhaart{ \E_{C} C^{\ot t}_{\reg A} \proj{\psi}_{\reg{AE}} C^{\ot t, \dagger}_{\reg A} }\right\|_1 \,,
\end{align}
where the equality uses the fact that the product unitary $UC$ is also Haar distributed by the invariance of the Haar measure.

\paragraph{Reduction to distinct subspace.} To show the above, the calculations are easier if we restrict attention to the subspace of $(\C^d)^{\ot t}$, that consists of distinct strings, i.e.~basis states of the form $\ket{\bx}$ where $\bx = (x_1,\dots,x_t) \in [d]^t$ is a tuple of distinct strings. We show in \Cref{sec:clifford} that applying a $t$-wise Clifford twirl ensures that the input state has a large overlap with this subspace. 

\begin{lemma}[Clifford twirl and distinct subspace] \label{lem:clifford}
Let $\Lambda$ be the projector on the distinct subspace defined in \Cref{def:distinct}. Then, for any state $\rho$ on the register $\reg{A}$, we have
    $$\Tr[\Lambda \E_{C} {{C}^{\ot t}} \rho \, {{C}^{\ot t,\dagger}}] \geq 1 - O(t^2/d)\,.$$
\end{lemma}

Let $\phi_{\reg{AE}}$ be the mixed state obtained by normalizing the (positive semi-definite) matrix $\Lambda_{\reg{A}} \xi_{\reg{AE}} \Lambda_{\reg{A}}$, where $\xi_{\reg{AE}}$ is defined in \Cref{eqn:td}. Then the above together with the gentle measurement lemma implies that $\|\phi - \xi\|_1 \le O(t/\sqrt{d})$. Consider a purification $\ket{\phi}_{\reg{A \tilde E}}$ of $\phi_{\reg{AE}}$ that satisfies $\Lambda_{\reg A} \ket{\phi}_{\reg{A \tilde E}} = \ket{\phi}_{\reg{A \tilde E}}$; such a purification exists by construction of $\phi_{\reg{AE}}$. Then, 
\begin{align*}
\norm{\rho^{\rm fr} - \rho^{\rm hr}}_1 
&\leq \norm{\mompft{\phi_{\reg{A E}}} - \momhaart{\phi_{\reg{A E}}}}_1 + O(t/\sqrt{d})\\
&\leq \norm{\mompft{\proj{\phi}_{\reg{A \tilde E}}} - \momhaart{\proj{\phi}_{\reg{A \tilde E}}}}_1 + O(t/\sqrt{d}) \,, \numberthis \label{eqn:haarvspf}
\end{align*}
where we use that a $t$-wise twirl is a quantum channel and the 1-norm can only decrease under partial trace. 
Thus, we may assume for the rest of the the proof that the input is supported over the distinct subspace of register $\reg{A}$.

\paragraph{Haar twirl vs permutation-phase twirl.} In \Cref{eqn:haarvspf}, the twirling operators only act on register $\reg A$, and since the input state $\ket{\phi}_{\reg{A \tilde E}}$ is only supported on the distinct string subspace of $A$, it remains to show that on this subspace of distinct strings, the Haar twirl and $PF$-twirl essentially act similarly.

\begin{lemma} \label{lem:haar_pf_close}
Let $\reg{A} \cong (\C^d)^{\ot t}$, let $\reg{\tilde E}$ be an arbitrary quantum register, and let $\ket{\phi}_{\reg{A \tilde E}}$ be a state that satisfies $\Lambda_{\reg A} \ket{\phi}_{\reg{AE}} = \ket{\phi}_{\reg{AE}}$.
Then, 
\begin{align*}
\norm{\momhaart{\proj{\phi}_{\reg{A\tilde E}}} - \mompft{\proj{\phi}_{\reg{A \tilde E}}}}_1 \leq O(t^2/d) \,.
\end{align*}
\end{lemma}

The proof of the above relies on computing what the two states look like in the Schur-Weyl basis and is presented in \Cref{sec:twirl}. 

Putting everything together, we have that 
\[ \|\rho^{\rm fr} - \rho^{\rm hr}\|_1 \le O(t/\sqrt{d})\,.\]

Since $t$ is polynomial in $n$ and $d=2^n$, the non-adaptive security follows. \qedhere
\end{proof}

\subsection{Clifford twirl and distinct subspace (proof of \Cref{lem:clifford})}
\label{sec:clifford}

We will omit the registers $\reg{A} = \reg{A_1 \dots A_t}$ from the notation unless needed. Our goal is to show that applying a $t$-wise Clifford twirl on any input state $\rho$ produces a state that has a large overlap with the distinct subspace. The proof only relies on the standard fact that the Clifford group forms a $2$-design. We consider the projector on the orthogonal complement of the distinct subspace
    \begin{align*}
    \bar \Lambda = \id - \Lambda = \sum_{\bx \in [d]^t \setminus \distinct(d,t)} \proj{\bx} \,,
    \end{align*}
and decompose the projector into $O(t^2)$ sub-projectors according to which elements collide: since any tuple of non-distinct strings must have at least two equal entries, we have
\begin{align*}
    \bar \Lambda \preccurlyeq \sum_{1\le i < j \le t} \Pi_{ij} \otimes \id_{[n]\setminus \{ij\}}\,, \quad \text{ where } \Pi_{ij} = \sum_{x \in [d]} \proj{x}_i \ot \proj{x}_j, 
    \end{align*}
where the subscript $i$ denotes the register $\reg{A}_i$. We shall omit the identity from the notation below.

Using cyclicity of trace, we have
    \begin{align*}
    \Tr[\bar \Lambda \E_{C} C^{\ot t} \rho C^{\ot t}] 
    &\leq \sum_{i < j} \Tr[\E_{C} C^{\ot t, \dagger} \Pi_{ij}  C^{\ot t} \rho] 
    = \sum_{i < j} \Tr[\E_{C} (C_i^\dagger \ot C_j^{\dagger})  \Pi_{ij} (C_i \ot C_j) \rho],  
\end{align*}
where for the second equality, we cancelled the $C$-unitaries on all systems except $i$ and $j$, with $C_i$ denoting application of $C$ on system $i$.

Using the standard fact that the Clifford group forms a 2-design, we can replace the average over Clifford unitaries with an average over the Haar measure in the above expression, which only uses the second moment. Thus,
\begin{align*}
    &   \Tr[\bar \Lambda \E_{C} C^{\ot t} \rho C^{\ot t}]  \le \sum_{i < j} \Tr[\E_{U \sim \HaarMeasure} (U_i^\dagger \ot U_j^{\dagger}) \Pi_{ij}  (U_i \ot U_j) \rho] = \sum_{i < j} \Tr[\E_{U \sim \HaarMeasure} (U^\dagger \ot U^{\dagger}) \Pi_{ij} (U \ot U) \rho_{ij}],
\end{align*}
where for the last equality, we performed the partial trace over all systems except $i$ and $j$, with $\rho_{ij}$ denoting the reduced state on these systems. Since $\rho_{ij}$ is a quantum state, we can bound each of the $O(t^2)$ trace terms by the corresponding operator norms to obtain
    \begin{align}\label{eqn:2}
    \tr[\bar \Lambda \E_{C} C^{\ot t} \rho C^{\ot t}] 
    &\leq O(t^2d) \norm{\E_{U \sim \HaarMeasure} (U \ot U)^{\dagger} \left(\frac{\Pi}{\Tr[\Pi]}\right) (U \ot U)}_\infty \,,
    \end{align}
where $\Pi = \sum_{x\in [d]} \proj{x} \ot \proj{x}$ with $\Tr[\Pi]=d$. 

Since applying a Haar random unitary on any state gives a Haar random state $\ket{\psi} \in \C^d$, by linearity of expectation the expression inside the operator norm is 
\[ \E_{\ket{\psi} \sim \HaarMeasure}[\proj{\psi} \ot \proj{\psi}]. \]
It is well known that this is the maximally mixed state on the symmetric subspace $\text{Sym}^2(d)$ of $\C^d \ot \C^d$, which has dimension  $\frac{d(d+1)}{2}$ (see \cite{harrow2013church}, Proposition 6). Thus, the operator norm is $\frac{2}{d(d+1)}$ and inserting this into \Cref{eqn:2} yields the claimed result.

\subsection{Haar twirl vs permutation-phase twirl (proof of \Cref{lem:haar_pf_close})} \label{sec:twirl}

In order to compare the action of the Haar and the permutation-phase twirl, we express the result of applying these operators in the Schur-Weyl basis. In particular, we consider states $\ket{\phi}_{\reg{A \tilde E}}$ where the register $\reg{A}$ is supported over the distinct subspace, i.e.~states satisfying $\Lambda_{\reg A} \ket{\phi}_{\reg{A\tilde E}} = \ket{\phi}_{\reg{A\tilde E}}$.
To apply Schur-Weyl duality, we decompose $\reg{A} \cong (\mathbb{C}^d)^{\ot t} \cong \bigoplus_{\lambda \vdash t} P_\lambda$ (where $P_\lambda = W_{\lambda} \ot V_{\lambda}$) in terms of the Weyl and Specht modules $W_{\lambda}$ and $V_{\lambda}$, respectively. We show the following statement, which describes the effect of the Haar twirl.

\begin{lemma}[Action of Haar twirl]\label{lem:haar-twirl} Let $\rho_{\lambda} = \dfrac{\ind_{W_\lambda}}{\Tr[\ind_{W_\lambda}]} $ be the maximally mixed state on $W_{\lambda}$.
Then
\begin{align}\label{eqn:haar-twirl}
    \     \E_{U \sim \Haar} (U^{\ot t})_{\reg A}  \proj{\phi}_{\reg{A \tilde E}} (U^{\ot t,\dagger})_{\reg A} = \sum_{\lambda \vdash t} \rho_{\lambda} \otimes \Tr_{W_\lambda}[\ind_{P_\lambda} \proj{\phi}_{\reg{A\tilde E}}\ind_{P_\lambda}]\,. 
\end{align}
\end{lemma}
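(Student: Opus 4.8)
The plan is to compute the Haar twirl $\mathcal{M}^{(t)}_{\mathrm{Haar}}(\proj{\phi})$ directly using Schur-Weyl duality. First I would observe that the twirled operator $\E_{U}(U^{\otimes t})_{\reg A}\proj{\phi}_{\reg{A\tilde E}}(U^{\otimes t,\dagger})_{\reg A}$ is invariant under conjugation by $U^{\otimes t}\otimes\id_{\reg{\tilde E}}$ for every $U\in\unitary(d)$ (by left-invariance of the Haar measure). By Schur-Weyl duality, $U^{\otimes t}=\sum_\lambda U^{(\lambda)}\otimes\id_{V_\lambda}$, so the twirl output, restricted to each block $P_\lambda\otimes\reg{\tilde E}$, commutes with $U^{(\lambda)}\otimes\id_{V_\lambda}\otimes\id_{\reg{\tilde E}}$ for all $U$; since $(W_\lambda,U^{(\lambda)})$ is an irreducible representation of $\unitary(d)$, Schur's lemma forces the output to act as $\frac{\id_{W_\lambda}}{\dim W_\lambda}\otimes(\text{something on }V_\lambda\otimes\reg{\tilde E})$ on each block. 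It also kills all off-diagonal blocks between $P_\lambda$ and $P_{\lambda'}$ for $\lambda\neq\lambda'$ (different irreps of $\unitary(d)$), and one should note there can be no cross terms coupling distinct $\lambda,\lambda'$ even after tensoring with $\reg{\tilde E}$, again by Schur's lemma applied componentwise. Hence
\[
\mathcal{M}^{(t)}_{\mathrm{Haar}}(\proj{\phi}) = \sum_{\lambda\vdash t} \rho_\lambda \otimes \sigma_\lambda
\]
for some operators $\sigma_\lambda$ on $V_\lambda\otimes\reg{\tilde E}$.

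Next I would identify $\sigma_\lambda$ by tracing out $W_\lambda$. Since the twirl preserves each block, applying $\ind_{P_\lambda}(\cdot)\ind_{P_\lambda}$ and then $\Tr_{W_\lambda}$ to both sides gives, on the right, $\Tr_{W_\lambda}[\rho_\lambda]\cdot\sigma_\lambda = \sigma_\lambda$ (as $\Tr[\rho_\lambda]=1$). On the left, I would commute the twirl past $\ind_{P_\lambda}$—this is legitimate because $\ind_{P_\lambda}$ commutes with every $U^{\otimes t}$—and then past $\Tr_{W_\lambda}$, using the fact that $U^{\otimes t}$ acts only on the $W_\lambda$ factor within $P_\lambda$; tracing out $W_\lambda$ makes the conjugation by $U^{(\lambda)}$ disappear entirely (it becomes $\Tr_{W_\lambda}[U^{(\lambda)}(\cdot)U^{(\lambda)\dagger}] = \Tr_{W_\lambda}[\cdot]$ since $U^{(\lambda)}$ is unitary). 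Therefore $\sigma_\lambda = \Tr_{W_\lambda}[\ind_{P_\lambda}\proj{\phi}\ind_{P_\lambda}]$, which is exactly the claimed formula.

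The main subtlety—rather than a hard obstacle—is handling the workspace register $\reg{\tilde E}$ carefully throughout: Schur's lemma in its textbook form is about operators commuting with an irrep, but here the output commutes only with $U^{(\lambda)}\otimes\id$, i.e. an operator of the form "irrep tensor identity," so one needs the standard generalization that any operator on $W_\lambda\otimes(V_\lambda\otimes\reg{\tilde E})$ commuting with $U^{(\lambda)}\otimes\id$ for all $U$ must be of the form $\id_{W_\lambda}\otimes(\text{arbitrary})$ up to normalization, and that distinct irreps $W_\lambda,W_{\lambda'}$ admit no nonzero intertwiner even when tensored with auxiliary spaces. Once that is in place, the rest is the bookkeeping above. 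I would present the Schur-lemma step cleanly and then the trace-out identification, and remark that the analogous computation for the $PF$-twirl (done in the next lemma) follows the same template but with the distinct-subspace projector $\Lambda$ replacing the full identity on each $W_\lambda$.
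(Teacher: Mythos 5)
Your proposal is correct, and it takes a genuinely different route from the paper's. The paper proves an auxiliary lemma (\Cref{lem:haar_moment_sw}) that computes $\momhaart{\ketbra{\alpha}{\beta}}$ explicitly for Schur--Weyl basis vectors $\ket{\alpha},\ket{\beta}$: it shows $\E_U U^{(\lambda)}\ketbra{w_{\lambda,i}}{w_{\lambda,i'}}U^{(\lambda),\dagger}$ is an intertwiner on the single irrep $W_\lambda$ and invokes Schur's lemma in its textbook form; the general case of \Cref{lem:haar-twirl} then follows by expanding $\ket{\phi}$ in the Schur--Weyl basis and applying linearity. You instead work directly at the operator level: you note that the twirled state is fixed under conjugation by $U^{\ot t}\ot\id_{\reg{\tilde E}}$, use the tensored version of Schur's lemma (an operator on $W_\lambda \ot X$ commuting with $U^{(\lambda)}\ot\id_X$ for all $U$ must be $\id_{W_\lambda}\ot(\cdot)$, and nonisomorphic $W_\lambda, W_{\lambda'}$ admit no nonzero intertwiner even after tensoring) to get the block form $\sum_\lambda \rho_\lambda\ot\sigma_\lambda$ in one step, and then identify $\sigma_\lambda$ cleanly by projecting onto $P_\lambda$, commuting the twirl through $\ind_{P_\lambda}$, and using $\Tr_{W_\lambda}[(U^{(\lambda)}\ot\id)M(U^{(\lambda),\dagger}\ot\id)] = \Tr_{W_\lambda}[M]$ to kill the residual $U$-dependence. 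Your route is more coordinate-free and avoids the index bookkeeping of the basis expansion, at the cost of requiring the slightly less elementary ``Schur's lemma with spectator tensor factors,'' which you correctly flag as the one nontrivial ingredient. The paper's route, by going through the explicit matrix-element lemma, keeps all applications of Schur's lemma to the bare irrep $W_\lambda$ and produces a formula (\Cref{lem:haar_moment_sw}) that is reusable elsewhere; either approach is sound.
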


We prove the above lemma in \Cref{sec:haar-twirl}. In fact, the same proof also shows that \Cref{lem:haar-twirl} holds for arbitrary input states, but for the proof of \Cref{lem:haar_pf_close} we only need to consider states supported on the distinct subspace.  

Next, recalling the decomposition of the distinct subspace projector given by \Cref{lem:perm_inv_sw}, we show that the action of the permutation-phase twirl results in the following; we defer the proof to \Cref{sec:pf-twirl}.

\begin{lemma}[Action of permutation-phase twirl]\label{lem:pf-twirl} Denoting by $\sigma_{\lambda} = \dfrac{\Lambda^{(\lambda)}_{W_\lambda}}{\Tr[\Lambda^{(\lambda)}_{W_\lambda}]} $ the maximally mixed state on the subspace $\supp(\Lambda^{(\lambda)}_{W_\lambda}) \cap W_{\lambda}$, we have 
    \[ \E_{P,F} (PF)_{\reg A}^{\ot t}  \proj{\phi}_{\reg{A \tilde E}}((PF)_{\reg A}^{\ot t})^\dagger = \sum_{\lambda \vdash t} \sigma_{\lambda} \otimes \Tr_{W_\lambda}[\ind_{P_\lambda} \proj{\phi}_{\reg{A \tilde E}}\ind_{P_\lambda}]\,.\]
\end{lemma}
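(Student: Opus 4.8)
The plan is to compute the action of the permutation-phase twirl $\mathcal{M}_{PF}^{(t)}$ on $\proj{\phi}_{\reg{A\tilde E}}$ directly in the Schur-Weyl basis, mirroring the structure of the Haar twirl computation in \Cref{lem:haar-twirl}, and then use the fact that the distinct-subspace restriction is exactly what makes the binary phase twirl ``complete'' the random permutation twirl into something that agrees with the Haar twirl block by block. First I would expand $\proj{\phi}$ in the Schur-Weyl basis $\{\ket{w_{\lambda,i}}\ot\ket{v_{\lambda,j}}\}$ on $\reg A$ (with arbitrary vectors on $\reg{\tilde E}$), so that $\proj{\phi} = \sum_{\lambda,i,j,\lambda',i',j'} \ket{w_{\lambda,i}}\!\bra{w_{\lambda',i'}}\ot \ket{v_{\lambda,j}}\!\bra{v_{\lambda',j'}} \ot M_{(\lambda,i,j),(\lambda',i',j')}$ for some operators $M$ on $\reg{\tilde E}$. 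Applying $(PF)^{\ot t}$ on $\reg A$ and averaging, the key observation is that $(PF)^{\ot t}$ commutes with the $S_t$-action $R_\pi$, hence (by Schur's lemma applied to each Specht module) $\E_{P,F}(PF)^{\ot t}(\cdot)(PF)^{\ot t,\dagger}$ preserves the $V_\lambda$-labels and acts as identity on the Specht factor; all the action is on the Weyl factors $W_\lambda$. So the twirled state has the form $\sum_\lambda \Phi_\lambda \ot \ket{v_{\lambda,j}}\!\bra{v_{\lambda,j'}} \ot M$, with $\Phi_\lambda$ a channel applied to the $W_\lambda$-block; it remains to identify $\Phi_\lambda$.

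To pin down $\Phi_\lambda$, the plan is to compute $\E_{P,F}(PF)^{\ot t}\proj{\bx}\!\bra{\by}((PF)^{\ot t})^\dagger$ on distinct computational basis tuples $\bx,\by \in \distinct(d,t)$, and then conjugate into the Schur-Weyl basis. For the phase part, $\E_F F^{\ot t}\ket{\bx}\!\bra{\by}F^{\ot t,\dagger} = \E_f (-1)^{\sum_i f(x_i) + f(y_i)}\ket{\bx}\!\bra{\by}$, and since $f$ is uniformly random this expectation is $1$ exactly when the multiset $\{x_1,\dots,x_t\}$ equals $\{y_1,\dots,y_t\}$ (so $\by = \bx_\sigma$ for some $\sigma\in S_t$, which because $\bx$ is distinct is unique) and $0$ otherwise. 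Then the permutation part: $\E_P P^{\ot t}\ket{\bx}\!\bra{\bx_\sigma}(P^{\ot t})^\dagger = \E_P \ket{P\bx}\!\bra{P\bx_\sigma}$, and since $P$ maps the distinct tuple $\bx$ to a uniformly random distinct tuple, this equals $\frac{1}{|\distinct(d,t)|}\sum_{\bz\in\distinct(d,t)}\ket{\bz}\!\bra{\bz_\sigma}$. Crucially this is just $\frac{1}{|\distinct(d,t)|}\Lambda R_{\sigma^{-1}}\Lambda$ (up to bookkeeping on which convention for $R$), i.e. a fixed operator proportional to $R_\sigma$ restricted to the distinct subspace. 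So on the distinct subspace, the $PF$-twirl sends $\ket{\bx}\!\bra{\by}$ to $\frac{1}{|\distinct(d,t)|}\sum_\sigma [\by = \bx_\sigma]\, \Lambda R_{\sigma^{\pm 1}}\Lambda$, which we can rewrite symmetrically as an average over $S_t$ of $\Lambda R_\sigma \Lambda \cdot (\text{something})$. Collecting terms, the twirled state becomes $\frac{1}{|\distinct(d,t)|}\sum_{\sigma\in S_t} (\Lambda R_\sigma)_{\reg A}\, \tilde\rho\, (\Lambda R_\sigma)^\dagger_{\reg A}$ for an appropriate operator $\tilde\rho$ built from $\proj\phi$ — essentially an ``$S_t$-twirl'' of the distinct-subspace-projected state, which is precisely the object that $\Lambda = \sum_\lambda \Lambda^{(\lambda)}_{W_\lambda}\ot\id_{V_\lambda}$ and \Cref{lem:P_characters} let us evaluate block-by-block.

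Carrying this through, on block $\lambda$ the operator $\frac{1}{|\distinct(d,t)|}\sum_\sigma \Lambda R_\sigma (\cdot) R_\sigma^\dagger \Lambda$ acts on $W_\lambda\ot V_\lambda$; by Schur orthogonality (\Cref{lem:orthogonality}) averaging $R^{(\lambda)}_\sigma (\cdot) R^{(\lambda),\dagger}_\sigma$ over $S_t$ on the Specht factor projects onto $\id_{V_\lambda}/\dim(V_\lambda)$ times the (partial trace) reduced operator on $W_\lambda$, which gets further projected by $\Lambda^{(\lambda)}_{W_\lambda}$; the normalization $|\distinct(d,t)|$ together with the dimension count $\Tr[\ind_{P_\lambda}] = \dim(W_\lambda)\dim(V_\lambda)$ and the fact that $\Lambda$ restricted to $P_\lambda$ has trace $\Tr[\Lambda^{(\lambda)}_{W_\lambda}]\dim(V_\lambda)$ is exactly what makes the Weyl-block come out as the \emph{maximally mixed} state $\sigma_\lambda = \Lambda^{(\lambda)}_{W_\lambda}/\Tr[\Lambda^{(\lambda)}_{W_\lambda}]$ rather than some other state. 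The remaining factor $\Tr_{W_\lambda}[\ind_{P_\lambda}\proj{\phi}\ind_{P_\lambda}]$ is what is left on $V_\lambda\ot\reg{\tilde E}$, identical to the Haar case. I expect the main obstacle to be the bookkeeping in the step that rewrites the phase-plus-permutation average as a clean $S_t$-twirl of the projected state — in particular getting the normalization constant and the placement of the $\Lambda$ projectors exactly right so that the Schur-orthogonality step produces the maximally mixed state on $\supp(\Lambda^{(\lambda)}_{W_\lambda})\cap W_\lambda$ with the stated normalization; the conceptual skeleton (phase twirl $\Rightarrow$ permutation-matching, permutation twirl $\Rightarrow$ uniform over distinct tuples $\Rightarrow$ $S_t$-twirl on the distinct subspace, then Schur-Weyl block decomposition) is otherwise parallel to the Haar computation.
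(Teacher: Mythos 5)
Your opening two steps are correct and match the paper: the binary phase twirl kills $\ketbra{\bx}{\by}$ unless $\by=\bx_\sigma$ for some $\sigma\in S_t$ (unique since $\bx$ is distinct), and the permutation twirl sends the survivor to $\Lambda R_\sigma/\Tr[\Lambda]$. But the reorganization step that follows contains a structural error that would derail the proof.

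The twirled state is \emph{not} a conjugation of a projected state: after applying the above to each $\ketbra{\bx}{\by}$, one gets
\[
\mompft{\proj{\phi}_{\reg{A\tilde E}}} = \frac{\Lambda_{\reg A}}{\Tr[\Lambda]}\sum_{\sigma\in S_t} (R_\sigma)_{\reg A}\ \ot\ \Tr_{\reg A}\!\left[(R_\sigma^\dagger\ot\id_{\reg{\tilde E}})\proj{\phi}_{\reg{A\tilde E}}\right],
\]
a sum of \emph{tensor products} in which the $\reg A$-factor $\Lambda R_\sigma/\Tr[\Lambda]$ is a fixed operator and all state-dependence sits in the $\reg{\tilde E}$-factor. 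Your proposed form $\frac{1}{|\distinct|}\sum_\sigma (\Lambda R_\sigma)_{\reg A}\,\tilde\rho\,(\Lambda R_\sigma)^\dagger_{\reg A}$ is a conjugation channel on $\reg A$, which retains the $\reg A$-content of $\tilde\rho$; already at $t=1$ this gives $\frac{1}{d}\proj{\phi}$ rather than $\frac{\id_{\C^d}}{d}\ot\Tr_A[\proj\phi]$. This is not a bookkeeping issue that working harder would fix — the two expressions have genuinely different structure, and the conjugation cannot be made to equal the twirl by any choice of $\tilde\rho$.

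The downstream Schur-orthogonality step inherits the error and points the wrong way. Averaging $R^{(\lambda)}_\sigma(\cdot)R^{(\lambda),\dagger}_\sigma$ over $S_t$ makes the \emph{Specht} factor $V_\lambda$ maximally mixed while leaving the Weyl factor as a reduced state — exactly the opposite of the lemma statement, which puts the maximally mixed state on (a subspace of) the \emph{Weyl} module $W_\lambda$ and the state-dependent operator $\Tr_{W_\lambda}[\ind_{P_\lambda}\proj{\phi}\ind_{P_\lambda}]$ on $V_\lambda\ot\reg{\tilde E}$. The paper's route instead applies Schur orthogonality in a different guise (\Cref{lem:K_simplified}): treating the matrix coefficients $\bra{\beta}R_\sigma^\dagger\ket{\alpha}$ from the partial trace as weights, one shows $\sum_\sigma \bra{\beta}R_\sigma^\dagger\ket{\alpha}\,R_\sigma = \delta_{\lambda\lambda'}\delta_{ii'}\,\frac{t!}{\dim V_\lambda}\,\id_{W_\lambda}\ot\ketbra{v_{\lambda,j}}{v_{\lambda,j'}}$, which is precisely what makes the Weyl factor flat. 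To repair your argument you need to write the sum over $\sigma$ in this weighted form (not as a conjugation channel) and use \Cref{lem:K_simplified} rather than the channel version of Schur orthogonality.
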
 

We now show how \Cref{lem:haar_pf_close} follows from \Cref{lem:haar-twirl,lem:pf-twirl}. Since all the subspaces $P_{\lambda}$ are orthogonal, we have 
\begin{align}\label{eqn:main1}
\norm{\momhaart{\proj{\phi}_{\reg{A \tilde E}}} - \mompft{\proj{\phi}_{\reg{A \tilde E}}}}_1 & = \sum_{\lambda} \norm{\left( \rho_{\lambda} - \sigma_{\lambda} \right) \ot \Tr_{W_\lambda}[\ind_{P_\lambda} \proj{\phi}_{\reg{A\tilde E}}\ind_{P_\lambda}]}_1 \notag \\
&= \sum_{\lambda} \norm{ \rho_{\lambda} - \sigma_{\lambda}}_1 \cdot \norm{\Tr_{W_\lambda}[\ind_{P_\lambda} \proj{\phi}_{\reg{A\tilde E}}\ind_{P_\lambda}]}_1 \notag \\
\ & \le \max_{\lambda} \norm{ \rho_{\lambda} - \sigma_{\lambda}}_1,
\end{align}
where the second equality used that the 1-norm (trace norm) is multiplicative under tensor products, while the inequality follows from the fact that $\sum_{\lambda} \|\Tr_{W_\lambda}[\ind_{P_\lambda} \proj{\phi}_{\reg{A\tilde E}}\ind_{P_\lambda}]\|=1$, which can be seen by taking the trace on both sides of \Cref{eqn:haar-twirl}.

Note that given a vector space $A = B \oplus B^\bot$, we have that 
\[ \norm{ \frac{\id_{A}}{\dim(A)} - \frac{\id_{B}}{\dim(B)}}_1 = \norm{\frac{\id_{B}}{\dim(A)} - \frac{\id_{B}}{\dim(B)}}_1 + \norm{\frac{\id_{B^\bot}}{\dim(A)}}_1 = 2-2\frac{\dim(B)}{\dim(A)}.\]

Since $\Lambda^{(\lambda)}_{W_\lambda}$ is a projector on a subspace of $W_{\lambda}$, we obtain the following for any $\lambda \vdash t$:
\begin{align}\label{eqn:main2}
    \norm{ \rho_{\lambda} - \sigma_{\lambda}}_1 &= \norm{ \frac{\id_{W_{\lambda}}}{\Tr[\id_{W_{\lambda}}]} - \frac{\Lambda^{(\lambda)}_{W_\lambda}}{\Tr[\Lambda^{(\lambda)}_{W_\lambda}]}}_1 = 2 - 2 \frac{\Tr[\Lambda^{(\lambda)}_{W_\lambda}]}{\Tr[ \id_{W_{\lambda}}]}\,.
\end{align}

We first compute the trace in the numerator of the right hand side.

\begin{claim}\label{clm:distinct-trace}
${\Tr[\Lambda^{(\lambda)}_{W_\lambda}]} = \dfrac{\dim(V_{\lambda})}{t!} \cdot{\Tr[\Lambda]}$.
\end{claim}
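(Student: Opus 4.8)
The plan is to compute $\Tr[\Lambda^{(\lambda)}_{W_\lambda}]$ by relating it back to the trace of the full distinct-subspace projector $\Lambda$ on $(\C^d)^{\ot t}$, exploiting the Schur-Weyl decomposition $\Lambda = \sum_{\mu \vdash t} \Lambda^{(\mu)}_{W_\mu} \ot \id_{V_\mu}$ from \Cref{lem:perm_inv_sw}. Taking the trace of this decomposition immediately gives $\Tr[\Lambda] = \sum_{\mu \vdash t} \Tr[\Lambda^{(\mu)}_{W_\mu}] \cdot \dim(V_\mu)$, which has the right shape but mixes all the $\mu$'s together; to isolate a single $\lambda$ I need to insert the isotypic projector $\ind_{P_\lambda}$ before tracing.

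Concretely, first I would observe that $\Lambda$ commutes with every $R_\pi$ (since permuting tensor factors preserves distinctness of a tuple), hence $\Lambda$ commutes with $\ind_{P_\lambda}$, and $\Lambda \ind_{P_\lambda} = \Lambda^{(\lambda)}_{W_\lambda} \ot \id_{V_\lambda}$ by \Cref{lem:perm_inv_sw} together with $\ind_{P_\lambda} = \ind_{W_\lambda}\ot\id_{V_\lambda}$. Therefore
\[
\Tr\!\big[\Lambda\, \ind_{P_\lambda}\big] = \Tr\!\big[\Lambda^{(\lambda)}_{W_\lambda}\big]\cdot \dim(V_\lambda).
\]
Next I would evaluate the left-hand side a second way, using the character formula for $\ind_{P_\lambda}$ from \Cref{lem:P_characters}:
\[
\Tr\!\big[\Lambda\, \ind_{P_\lambda}\big] = \frac{\dim(V_\lambda)}{t!}\sum_{\pi \in S_t} \chi_\lambda(\pi^{-1})\,\Tr[\Lambda R_\pi].
\]
The key computational step is that $\Tr[\Lambda R_\pi]$ does not depend on $\pi$: writing $\Lambda = \sum_{\bx \in \distinct(d,t)} \proj{\bx}$ and using $R_\pi\ket{\bx} = \ket{\bx_{\pi^{-1}}}$, we get $\Tr[\Lambda R_\pi] = \sum_{\bx \in \distinct(d,t)} \langle \bx \mid \bx_{\pi^{-1}}\rangle$. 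For a \emph{distinct} tuple $\bx$, the inner product $\langle \bx \mid \bx_{\pi^{-1}}\rangle$ is $1$ if $\pi$ is the identity and $0$ otherwise (a nontrivial permutation moves at least one coordinate, and distinctness forces the moved entry to differ). Hence $\Tr[\Lambda R_\pi] = |\distinct(d,t)| \cdot \delta_{\pi, e} = \Tr[\Lambda]\,\delta_{\pi,e}$, and only the identity term survives the sum:
\[
\Tr\!\big[\Lambda\, \ind_{P_\lambda}\big] = \frac{\dim(V_\lambda)}{t!}\,\chi_\lambda(e)\,\Tr[\Lambda] = \frac{\dim(V_\lambda)}{t!}\,\dim(V_\lambda)\,\Tr[\Lambda],
\]
using $\chi_\lambda(e) = \dim(V_\lambda)$. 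Comparing the two evaluations of $\Tr[\Lambda \ind_{P_\lambda}]$ and cancelling one factor of $\dim(V_\lambda)$ yields exactly $\Tr[\Lambda^{(\lambda)}_{W_\lambda}] = \frac{\dim(V_\lambda)}{t!}\Tr[\Lambda]$, as claimed.

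I do not anticipate a serious obstacle here — the only point requiring a moment's care is the claim that $\langle\bx\mid\bx_{\pi^{-1}}\rangle$ vanishes for non-identity $\pi$ on distinct tuples, which is exactly where the reduction to the distinct subspace pays off (this is false for general tuples, where stabilizer subgroups contribute). Everything else is bookkeeping with the already-established representation-theoretic lemmas (\Cref{lem:perm_inv_sw}, \Cref{lem:P_characters}) and the identity $\ind_{P_\lambda} = \ind_{W_\lambda}\ot\id_{V_\lambda}$.
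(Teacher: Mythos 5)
Your proposal is correct and follows essentially the same route as the paper's proof: both reduce $\Tr[\Lambda^{(\lambda)}_{W_\lambda}]$ to $\frac{1}{\dim V_\lambda}\Tr[\Lambda\,\ind_{P_\lambda}]$ via the decomposition of \Cref{lem:perm_inv_sw}, then evaluate that trace with the character formula of \Cref{lem:P_characters} together with the observation that $\Tr[\Lambda R_\pi]=\Tr[\Lambda]\,\delta_{\pi,e}$. The only difference is that you spell out more explicitly why $\langle\bx\mid\bx_{\pi^{-1}}\rangle$ vanishes for non-identity $\pi$ on distinct tuples, which the paper asserts without elaboration.
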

\begin{proof}
   \Cref{lem:perm_inv_sw} implies that $\Lambda = \sum_{\lambda \vdash t} \Lambda^{(\lambda)}_{W_{\lambda}} \otimes \id_{V_{\lambda}}$. Since $P_{\lambda} = W_{\lambda} \ot V_{\lambda}$, it follows that
    \begin{align}\label{eqn:dis-trace}
   \Tr[\Lambda^{(\lambda)}_{W_\lambda}] = \frac{1}{\dim V_\lambda} \Tr[\Lambda \id_{P_\lambda}]\,.
    \end{align}
   Plugging in the expression for the projector $\id_{P_{\lambda}}$ from \Cref{lem:P_characters}, we get that
    \begin{align*}
    \Tr[\Lambda \id_{P_\lambda}] = \frac{\dim(V_\lambda)}{t!} \sum_\pi \chi_\lambda(\pi^{-1}) \Tr[\Lambda R_\pi] = \frac{\dim(V_\lambda)^2}{t!} \Tr[\Lambda] \,, \label{eqn:lambda_P_ip}
    \end{align*}
    where we used the fact that $\Tr(R_\pi \Lambda) = 0$ unless $\pi = e$, and that $\chi_\lambda(e) = \Tr[\id_{V_{\lambda}}] = \dim(V_\lambda)$. 
    This yields the desired result after insertion into \Cref{eqn:dis-trace}.
\end{proof}

Given the above, we can now compute the quantity in \Cref{eqn:main2} by using the dimension bounds for Weyl and Specht modules.

\begin{claim}\label{cm:ratio} $1- \dfrac{\Tr[\Lambda^{(\lambda)}_{W_\lambda}]}{\Tr[ \id_{W_{\lambda}}]} \le O(t^2/d).$
\end{claim}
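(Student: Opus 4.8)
The plan is to collapse the ratio of traces into a ratio of explicit dimension formulas and then estimate it by elementary inequalities. First I would combine \Cref{clm:distinct-trace}, which gives $\Tr[\Lambda^{(\lambda)}_{W_\lambda}] = \frac{\dim(V_\lambda)}{t!}\Tr[\Lambda]$, with \Cref{lem:dim_weyl_specht}, which gives $\Tr[\id_{W_\lambda}] = \dim(W_\lambda) = \frac{\dim(V_\lambda)}{t!}\prod_{(i,j)\in\lambda}(d+j-i)$. The common factor $\dim(V_\lambda)/t!$ cancels, leaving
\[
\frac{\Tr[\Lambda^{(\lambda)}_{W_\lambda}]}{\Tr[\id_{W_\lambda}]} = \frac{\Tr[\Lambda]}{\prod_{(i,j)\in\lambda}(d+j-i)} \,,
\]
so the claim reduces to a statement purely about this dimension ratio (which in particular is automatically at most $1$, since $\Lambda^{(\lambda)}_{W_\lambda}$ is a sub-projector of $\id_{W_\lambda}$).

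Next I would evaluate numerator and denominator. The numerator is $\Tr[\Lambda] = |\distinct(d,t)| = d(d-1)\cdots(d-t+1)$, a product of $t$ factors each at least $d-t+1$, hence $\Tr[\Lambda] \ge (d-t+1)^t$. For the denominator, I use that a Young diagram with $t$ boxes has at most $t$ nonempty rows and at most $t$ nonempty columns, so every box $(i,j)\in\lambda$ has $1 \le i,j \le t$ and therefore content $|j-i| \le t-1$; consequently $\prod_{(i,j)\in\lambda}(d+j-i) \le (d+t-1)^t$. Combining these two bounds and applying Bernoulli's inequality $(1-x)^t \ge 1-tx$ (valid because $\frac{2(t-1)}{d+t-1} \le 1$ when $d \gg t$) gives
\[
\frac{\Tr[\Lambda^{(\lambda)}_{W_\lambda}]}{\Tr[\id_{W_\lambda}]} \ge \left(\frac{d-t+1}{d+t-1}\right)^t = \left(1 - \frac{2(t-1)}{d+t-1}\right)^t \ge 1 - \frac{2t(t-1)}{d+t-1} \ge 1 - \frac{2t^2}{d} \,,
\]
where the last step uses $d+t-1 \ge d$. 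Rearranging yields $1 - \frac{\Tr[\Lambda^{(\lambda)}_{W_\lambda}]}{\Tr[\id_{W_\lambda}]} \le \frac{2t^2}{d} = O(t^2/d)$.

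I do not expect a real obstacle in this argument; it is essentially a one-line computation once \Cref{clm:distinct-trace} and \Cref{lem:dim_weyl_specht} are in hand. The only points requiring a moment of care are the content bound $|j-i| \le t-1$ (one could equally well use the cruder $|j-i|\le t$ and run the identical computation, losing only constants in the $O(\cdot)$) and ensuring $d \gg t$ so that the Bernoulli step is legitimate — both of which hold trivially since $t = \poly(n)$ and $d = 2^n$.
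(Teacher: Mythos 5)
Your proof is correct and follows essentially the same route as the paper: combine \Cref{clm:distinct-trace} with \Cref{lem:dim_weyl_specht} to reduce the ratio to $\Tr[\Lambda]/\prod_{(i,j)\in\lambda}(d+j-i)$, then bound numerator and denominator by $(d-t)^t$-type and $(d+t)^t$-type quantities and finish with an elementary estimate. Your constants are marginally tighter (using $d-t+1$, $d+t-1$, and an explicit Bernoulli step where the paper just asserts the final $O(t^2/d)$), but the argument is the same.
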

\begin{proof}
Using \Cref{clm:distinct-trace} and \Cref{lem:dim_weyl_specht}, we have  
\[ 1- \dfrac{\Tr[\Lambda^{(\lambda)}_{W_\lambda}]}{\Tr[ \id_{W_{\lambda}}]} = 1- \frac{\dim(V_{\lambda})\Tr[\Lambda]}{t! \dim(W_{\lambda})} = 1- \frac{\Tr[\Lambda]}{\Pi_{(i,j) \in \lambda} (d + j - i)}\,.\]

Note that $\Tr[\Lambda] = \frac{d!}{(d-t)!} \ge (d-t)^t$ and $\Pi_{(i,j) \in \lambda} (d + j - i) \le (d+t)^t$, since there are at most $t$ boxes in the Young diagram corresponding to $\lambda$, and the coordinates $i,j$ range from $1$ to $t$. Thus,
\[ 1- \dfrac{\Tr[\Lambda^{(\lambda)}_{W_\lambda}]}{\Tr[ \id_{W_{\lambda}}]} \le 1-  \left(\frac{d-t}{d+t}\right)^t \le 1 - \Big(\frac{1 - t/d}{1+t/d} \Big)^t \leq O(t^2/d)\,,\]
using that $t^2/d \ll 1$.
\end{proof}

Plugging the above bound into \Cref{eqn:main2} and  \Cref{eqn:main1} completes the proof of \Cref{lem:haar_pf_close}.

\subsubsection{Action of the $t$-wise Haar twirl (proof of \Cref{lem:haar-twirl})}
\label{sec:haar-twirl}

In order to derive the expression given by \Cref{lem:haar-twirl}, we first compute the result of applying the $t$-wise Haar twirl on Schur-Weyl basis states.

\begin{lemma} \label{lem:haar_moment_sw}
Let $\reg A \cong (\C^d)^{\ot t}$ and let $\ket{\alpha} = \ket{w_{\lambda, i}} \ot \ket{v_{\lambda, j}}$ and $\ket{\beta} = \ket{w_{\lambda', i'}} \ot \ket{v_{\lambda', j'}}$ be Schur-Weyl basis states on $\reg A$. 
Then 
\begin{align*}
\momhaart{\ketbra{\alpha}{\beta}} = \begin{cases} \frac{\id_{W_{\lambda}}}{\dim W_{\lambda}} \ot \ketbra{v_{\lambda, j}}{v_{\lambda, j'}}\, , & \text{ if $\lambda = \lambda'$ and $i = i'$} \\ 0 \, , & \text{otherwise} \end{cases}\,.
\end{align*}
\end{lemma}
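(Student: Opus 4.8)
The plan is to apply Schur-Weyl duality (\Cref{lem:schur-weyl}) to strip off the Specht-module part, reducing the Haar average to an average over a single irreducible representation of $\unitary(d)$, and then to finish with Schur's lemma.

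First I would use the decomposition $U^{\ot t} = \sum_{\mu \vdash t} U^{(\mu)} \ot \ind_{V_\mu}$ from \Cref{lem:schur-weyl}. Since $\ket{\alpha} = \ket{w_{\lambda,i}} \ot \ket{v_{\lambda,j}}$ lies entirely in the block $P_\lambda = W_\lambda \ot V_\lambda$, we get $U^{\ot t}\ket{\alpha} = (U^{(\lambda)}\ket{w_{\lambda,i}}) \ot \ket{v_{\lambda,j}}$, and likewise $\bra{\beta} U^{\ot t,\dagger} = (\bra{w_{\lambda',i'}} U^{(\lambda'),\dagger}) \ot \bra{v_{\lambda',j'}}$. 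Hence the Specht-module factor is never touched by the twirl and
\begin{align*}
\momhaart{\ketbra{\alpha}{\beta}} = \left(\E_{U \sim \Haar} U^{(\lambda)}\, \ketbra{w_{\lambda,i}}{w_{\lambda',i'}}\, U^{(\lambda'),\dagger}\right) \ot \ketbra{v_{\lambda,j}}{v_{\lambda',j'}}\,,
\end{align*}
where the first factor is read as an operator from $W_{\lambda'}$ to $W_\lambda$ (living in the appropriate, possibly off-diagonal, block of $\linear((\C^d)^{\ot t})$).

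It then remains to evaluate the $\unitary(d)$-average $\E_U U^{(\lambda)} M U^{(\lambda'),\dagger}$ with $M = \ketbra{w_{\lambda,i}}{w_{\lambda',i'}}$. By left-invariance of the Haar measure this average is an intertwiner between the $\unitary(d)$-representations $(W_{\lambda'}, U^{(\lambda')})$ and $(W_\lambda, U^{(\lambda)})$, which are irreducible by \Cref{lem:schur-weyl}. For $d \geq t$ the Weyl modules attached to distinct partitions $\lambda \vdash t$ are pairwise inequivalent $\unitary(d)$-irreps, so Schur's lemma makes the average vanish when $\lambda \neq \lambda'$. When $\lambda = \lambda'$, Schur's lemma forces $\E_U U^{(\lambda)} M U^{(\lambda),\dagger} = c\,\ind_{W_\lambda}$, and taking the trace (using cyclicity and invariance) gives $c = \Tr[M]/\dim W_\lambda = \braket{w_{\lambda,i'}}{w_{\lambda,i}}/\dim W_\lambda = \delta_{i,i'}/\dim W_\lambda$. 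Substituting back reproduces the two cases in the statement.

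I do not expect a genuine obstacle: this is a textbook combination of Schur-Weyl duality and Schur's lemma. The only point that warrants care is the claim that $W_\lambda \not\cong W_{\lambda'}$ as $\unitary(d)$-representations for $\lambda \neq \lambda'$, which is exactly where the standing assumption $d \gg t$ enters (ensuring every Young diagram with at most $t$ boxes has at most $d$ rows, so no Weyl module degenerates or coincides with another); I would state this explicitly when invoking Schur's lemma in the $\lambda \neq \lambda'$ case.
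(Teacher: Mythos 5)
Your proof is correct and follows essentially the same route as the paper: use Schur--Weyl duality to strip off the Specht factor, recognize the resulting $\unitary(d)$-average as an intertwiner, and invoke Schur's lemma plus a trace computation to pin down the scalar. The only difference is cosmetic — you make explicit the vanishing for $\lambda \neq \lambda'$ via inequivalence of the Weyl modules, which the paper states without elaboration.
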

\begin{proof}
By Schur-Weyl duality (\Cref{lem:schur-weyl}), we have that $U^{\ot t} = \sum_{\lambda_1} U^{(\lambda_1)} \otimes \id_{V_{\lambda_1}}$ where $U^{(\lambda_1)}$ only acts on $W_{\lambda_1}$. Thus, 
\begin{align*}
\momhaart{\ketbra{\alpha}{\beta}}  = \sum_{\lambda_1, \lambda_2} \left( \E_{U \sim \HaarMeasure} U^{(\lambda_1)} \ketbra{w_{\lambda, i}}{w_{\lambda', i'}} U^{(\lambda_2), \dagger} \right) \ot  \id_{V_{\lambda_1}}\ketbra{v_{\lambda, j}}{v_{\lambda', j'}}  \id_{V_{\lambda_2}} \,.
\end{align*}
We may assume that $\lambda = \lambda'$, since the above is zero otherwise. Then, we have that 
\begin{align*}
\momhaart{\ketbra{\alpha}{\beta}}  = \left( \E_{U \sim \HaarMeasure} U^{(\lambda)} \ketbra{w_{\lambda, i}}{w_{\lambda, i'}} U^{(\lambda), \dagger} \right) \ot \ketbra{v_{\lambda, j}}{v_{\lambda, j'}} \,.
\end{align*}
We claim that the expression inside the parentheses above is zero unless $i=i'$, in which case it is the maximally mixed state on the subspace $W_{\lambda}$. This follows from a standard fact in representation theory called Schur's Lemma (see~\cite{fulton2013representation}, Lemma 1.7), which says that if $(\mu, H)$ is an irreducible representations of a group $G$ and $T: H \to H$ is a linear map such that $T \circ \mu = \mu \circ T$ (such a map is called an \emph{intertwiner}), then $T = \gamma \cdot \id_{H}$ for some scalar $\gamma \in \C$. Applying it to our setting, we see that the operator
$$
T = \E_{ U \sim \HaarMeasure} {U}^{(\lambda)} \ketbra{w_{\lambda, i}}{w_{\lambda, i'}} {U}^{(\lambda),\dag} $$ is an intertwiner for the irrep $(U^{(\lambda)}, W_{\lambda})$. This fact is a simple consequence of Haar invariance, since
\begin{align*}
T \tilde{U}^{(\lambda)} &= \left(\E_{U \sim \HaarMeasure} {U}^{(\lambda)} \ketbra{w_{\lambda, i}}{w_{\lambda, i'}} {U}^{(\lambda),\dag} \right) \tilde{U}^{(\lambda)} \\
&=\E_{ U \sim \HaarMeasure} {U}^{(\lambda)} \ketbra{w_{\lambda, i}}{w_{\lambda, i'}} \left(\tilde{U}^{(\lambda),\dag} {U}^{(\lambda)}\right)^\dag\\
&=\E_{U \sim \HaarMeasure} \left(\tilde{U}^{(\lambda)}{U}^{(\lambda)}\right) \ketbra{w_{\lambda, i}}{w_{\lambda, i'}} {U}^{(\lambda),\dag} = \tilde{U}^{(\lambda)} T.
\end{align*}

Therefore, by Schur's Lemma, 
\begin{align*}
\E_{U \sim \HaarMeasure} U^{(\lambda)} \ketbra{w_{\lambda, i}}{w_{\lambda, i'}} U^{(\lambda),\dag} = \gamma_{\lambda, i, i'} \id_{W_{\lambda}}
\end{align*}
for some scalars $\alpha_{\lambda, i, i'}$. Since the operator on the left is traceless if $i \neq i'$, we have that $\alpha_{\lambda, i, i'} = 0$ unless $i = i'$. For $i = i'$, the normalisation follows because the operator on the left is a mixed state, i.e.~it has unit trace. This completes the proof.
\end{proof}

We can now compute the result of applying a $t$-wise Haar twirl to a general state.
\begin{proof}[Proof of \Cref{lem:haar-twirl}]
Expanding in the Schur-Weyl basis, 
\begin{align*}
\ket{\phi}_{\reg{A \tilde E}} = \sum_{\lambda, i, j} \left( \ket{w_{\lambda, i}} \ket{v_{\lambda, j}} \right)_{\reg A} \ot \ket{e_{\lambda, i, j}}_{\reg {\tilde E}}
\end{align*}
for not necessarily normalised vectors $\ket{e_{\lambda, i, j}}_{\reg {\tilde E}}$. It follows from \Cref{lem:haar_moment_sw} and linearity that 
\begin{align*}
\momhaart{\proj{\phi}_{\reg{A \tilde E}}} &= \sum_{\lambda, i, j, j'} \left( 
\frac{\id_{W_{\lambda}}}{\dim W_{\lambda}} \ot \ketbra{v_{\lambda, j}}{v_{\lambda, j'}} \right)_{\reg A} \ot \ketbra{e_{\lambda, i, j}}{e_{\lambda, i, j'}}_{\reg {\tilde E}} \,
\\ & = \sum_{\lambda \vdash t} \rho_{\lambda} \otimes \Tr_{W_\lambda}[\ind_{P_\lambda} \proj{\phi}_{\reg{A\tilde E}}\ind_{P_\lambda}],
\end{align*}
where $\rho_\lambda$ is the maximally mixed state on the subspace $W_{\lambda}$. 
\end{proof}

\subsubsection{Permutation-phase twirl on distinct subspace (proof of \Cref{lem:pf-twirl})}
\label{sec:pf-twirl}

In order to derive an exact expression for the permutation-phase twirl on the distinct subspace with projector $\Lambda$, we start with the following lemma.

\begin{lemma} \label{lem:dis_string_to_perm}
Let $\bx, \by \in \distinct(d,t)$. 
Then
\begin{align*}
\mompft{ \ketbra{\bx}{\by}} = \begin{cases} \dfrac{\Lambda R_{\sigma}}{\Tr[\Lambda]}  & \text{ if } \by = \bx_{\sigma} \text{ for } \sigma \in S_t,\\ 0 & \text{ otherwise.} \end{cases} 
\end{align*}
\end{lemma}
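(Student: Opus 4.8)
The plan is to compute $\mompft{\ketbra{\bx}{\by}} = \E_{P,F}(PF)^{\ot t}\ketbra{\bx}{\by}(PF)^{\ot t,\dagger}$ by first pushing through the phase operator $F$ and then the permutation operator $P$, using independence of the two. Writing $\bx = (x_1,\dots,x_t)$ and $\by=(y_1,\dots,y_t)$ for distinct tuples, we have $F^{\ot t}\ket{\bx} = (-1)^{f(x_1)+\dots+f(x_t)}\ket{\bx}$, so the phase accrued to $\ketbra{\bx}{\by}$ is $(-1)^{\sum_i f(x_i) + \sum_j f(y_j)}$. Averaging over a uniformly random Boolean function $f$: each value $f(z)$ is an independent uniform bit, so the expectation of this sign is $1$ if every $z\in[d]$ appears an even number of times in the multiset $\{x_1,\dots,x_t,y_1,\dots,y_t\}$, and $0$ otherwise. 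Since $\bx$ and $\by$ are each individually distinct tuples, each $x_i$ appears exactly once among the $x$'s and each $y_j$ exactly once among the $y$'s; hence every element appears an even number of times iff $\{x_1,\dots,x_t\} = \{y_1,\dots,y_t\}$ as sets, i.e.\ iff $\by = \bx_\sigma$ for some permutation $\sigma\in S_t$ (which is then unique, by distinctness). This already gives the ``otherwise'' case: $\mompft{\ketbra{\bx}{\by}} = 0$ unless $\by = \bx_\sigma$.

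It remains to handle the case $\by = \bx_\sigma$, where after the $F$-average we are left with $\E_P (P^{\ot t})\ketbra{\bx}{\bx_\sigma}(P^{\ot t})^\dagger$ for a uniformly random permutation operator $P$ on $\C^d$. Here $P^{\ot t}\ket{\bx} = \ket{(P x_1,\dots,P x_t)}$, so the operator becomes $\E_P \ketbra{P\bx}{P\bx_\sigma}$, where $P\bx$ denotes the coordinatewise image. As $P$ ranges uniformly over $S_d$ and $\bx$ is a distinct tuple, the image $P\bx$ is a uniformly random distinct tuple in $\distinct(d,t)$; moreover $P\bx_\sigma = (P\bx)_\sigma$. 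Therefore $\E_P\ketbra{P\bx}{P\bx_\sigma} = \E_{\bz \sim \distinct(d,t)} \ketbra{\bz}{\bz_\sigma} = \frac{1}{|\distinct(d,t)|}\sum_{\bz\in\distinct(d,t)} \ketbra{\bz}{\bz_\sigma}$. Now $|\distinct(d,t)| = \Tr[\Lambda]$, and I claim $\sum_{\bz\in\distinct(d,t)}\ketbra{\bz}{\bz_\sigma} = \Lambda R_\sigma$: indeed $R_\sigma\ket{\bz_\sigma} = \ket{(\bz_\sigma)_{\sigma^{-1}}} = \ket{\bz}$ — wait, more carefully, with the convention $R_\pi:\ket{\ba}\mapsto\ket{\ba_{\pi^{-1}}}$ one has $\bra{\bz_\sigma} = \bra{\bz}R_{\sigma}^{?}$; I would check the index bookkeeping so that $\sum_{\bz}\ketbra{\bz}{\bz_\sigma}$ is exactly $\Lambda$ composed with the permutation operator sending $\ket{\bz}\mapsto\ket{\bz_\sigma}$, which is $R_{\sigma^{-1}}$ or $R_\sigma$ depending on the convention in the relevant lemma, and fix the statement accordingly. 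This yields $\mompft{\ketbra{\bx}{\by}} = \Lambda R_\sigma/\Tr[\Lambda]$ as desired.

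The only real subtlety is the index/convention bookkeeping in the last step — getting the direction of the permutation operator $R_\sigma$ right relative to the definition $R_\pi:\ket{\ba}\mapsto\ket{\ba_{\pi^{-1}}}$ and the tuple-permutation convention $\bx_\sigma = (x_{\sigma(1)},\dots,x_{\sigma(t)})$. Everything else (the $F$-average being an even-multiplicity indicator, and the $P$-average turning a fixed distinct tuple into a uniform distinct tuple) is routine. I would therefore carry out the proof in exactly the order above: (i) average over $F$ to kill all cross terms except $\by=\bx_\sigma$; (ii) average over $P$ to symmetrize over the distinct subspace; (iii) identify the resulting sum $\sum_{\bz\in\distinct(d,t)}\ketbra{\bz}{\bz_\sigma}$ with $\Lambda R_\sigma$ via a careful index check; (iv) normalize using $|\distinct(d,t)| = \Tr[\Lambda]$.
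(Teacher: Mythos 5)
Your proposal is correct and follows essentially the same route as the paper: average over $F$ to kill all cross terms except $\by = \bx_\sigma$, then average over $P$ to symmetrize over the distinct subspace. The index bookkeeping you flag does work out: with $R_\pi : \ket{\ba}\mapsto\ket{\ba_{\pi^{-1}}}$ one has $\bra{\bz}R_\sigma = \bra{\bz_\sigma}$, so $\sum_{\bz\in\distinct(d,t)}\ketbra{\bz}{\bz_\sigma} = \sum_{\bz}\proj{\bz}R_\sigma = \Lambda R_\sigma$, exactly as stated. The paper avoids this check by first writing $\ketbra{\bx}{\bx_\sigma} = \proj{\bx}R_\sigma$, commuting $R_\sigma$ past $P^{\ot t,\dagger}$, and then using $\E_P P^{\ot t}\proj{\bx}P^{\ot t,\dagger} = \Lambda/\Tr[\Lambda]$ for any distinct $\bx$ — a slightly cleaner version of the same step — but both are fine.
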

Note that since $\bx, \by \in \distinct(d,t)$, there exists at most one permutation for which $\by = \bx_{\sigma}$. 
\begin{proof}
Applying the $t$-wise $F$-twirl first, we get that
\begin{align*}
\E_{F} F^{\ot t} \ketbra{\bx}{\by} F^{\ot t, \dagger}
&= \left( \E_f (-1)^{\sum_i f(x_i) + f(y_i)}\right) \ketbra{\bx}{\by} \,.
\end{align*}
Here, the expectation $\E_f$ is over a uniformly random function $f: [d] \to \bit$.
Because $\bx$ and $\by$ are both tuples of distinct strings, it is easy to see that
\begin{align*}
 \E_f \left[(-1)^{\sum_i f(x_i) + f(y_i)} \right] = 
 \begin{cases}
1, & \text{ if } \by = \bx_{\sigma}, \text{ for some } \sigma \in S_t, \text{ and }\\
0, & \text{ otherwise}.
 \end{cases}
\end{align*}
Next applying the $t$-wise $P$-twirl, 
\begin{align*}
\E_{P} P^{\ot t}  \left(\E_{F} F^{\ot t} \ketbra{\bx}{\bx} F^{\ot t, \dagger}\right) P^{\ot t, \dagger} 
&= \E_{P} P^{\ot t} \ketbra{\bx}{\bx}R_{\sigma} P^{\ot t, \dagger} = \E_{P} P^{\ot t}\proj{\bx} P^{\ot t, \dagger} R_\sigma \,,
\end{align*}
where we used that  $P^{\ot t}$ commutes with $R_\sigma$.
To conclude, we note that for any tuple of distinct strings $\bx$, 
\begin{equation*}
\ \E_{P} P^{\ot t} \proj{\bx} P^{\ot t, \dagger} = \frac{\Lambda}{\Tr[\Lambda]}. \qedhere
\end{equation*}
\end{proof}

We will also need the following technical result, which follows from the Schur orthogonality relations (\cref{lem:orthogonality}).

\begin{lemma} \label{lem:K_simplified}
Let $\ket{\alpha} = \ket{w_{\lambda, i}} \ot \ket{v_{\lambda, j}}$ and $\ket{\beta} = \ket{w_{\lambda', i'}} \ot \ket{v_{\lambda', j'}}$ be Schur-Weyl basis states.
Then
\begin{align*}
\sum_{\sigma \in S_t} \bra{\beta} R^\dagger_{\sigma} \ket{\alpha} R_\sigma = \delta_{\lambda,\lambda'} \delta_{i,i'} \cdot \frac{t!}{\dim V_{\lambda}} \cdot (\id_{W_\lambda} \ot \ketbra{v_{\lambda,j}}{v_{\lambda, j'}})\,.
\end{align*}
\end{lemma}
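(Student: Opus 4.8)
The plan is to expand $R_\sigma$ in the Schur-Weyl basis using Schur-Weyl duality and then apply the Schur orthogonality relations coefficient-by-coefficient. Recall from \Cref{lem:schur-weyl} that $R_\sigma = \sum_{\mu \vdash t} \ind_{W_\mu} \ot R^{\mu}_\sigma$, so for Schur-Weyl basis vectors $\ket{\alpha} = \ket{w_{\lambda,i}}\ket{v_{\lambda,j}}$ and $\ket{\beta} = \ket{w_{\lambda',i'}}\ket{v_{\lambda',j'}}$ we immediately get
\[
    \bra{\beta} R^\dagger_\sigma \ket{\alpha} = \delta_{\lambda,\lambda'}\,\delta_{i,i'}\;\overline{\bra{v_{\lambda',j'}} R^{\lambda'}_\sigma \ket{v_{\lambda',j}}}\,,
\]
since the $W$-part of $R_\sigma$ is the identity, which contributes $\langle w_{\lambda',i'} | w_{\lambda,i}\rangle = \delta_{\lambda,\lambda'}\delta_{i,i'}$, and on the Specht factor we get the matrix coefficient $\overline{\bra{v_{\lambda,j'}} R^{\lambda}_\sigma \ket{v_{\lambda,j}}}$ (taking a complex conjugate because of the dagger). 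So if $\lambda \neq \lambda'$ or $i \neq i'$ the whole sum vanishes, matching the claimed $\delta_{\lambda,\lambda'}\delta_{i,i'}$.

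Assuming $\lambda = \lambda'$ and $i = i'$, the remaining task is to evaluate $\sum_{\sigma \in S_t} \overline{\bra{v_{\lambda,j'}} R^{\lambda}_\sigma \ket{v_{\lambda,j}}}\, R_\sigma$, which is an operator on $(\C^d)^{\ot t}$. First I would again expand the $R_\sigma$ inside the sum using Schur-Weyl, writing it as $\sum_{\nu \vdash t} \ind_{W_\nu} \ot \big(\sum_\sigma \overline{\bra{v_{\lambda,j'}} R^{\lambda}_\sigma \ket{v_{\lambda,j}}}\, R^{\nu}_\sigma\big)$. The inner operator on $V_\nu$ has matrix entries $\sum_\sigma \overline{\bra{v_{\lambda,j'}} R^{\lambda}_\sigma \ket{v_{\lambda,j}}}\,\bra{v_{\nu,k}} R^{\nu}_\sigma \ket{v_{\nu,\ell}}$, which is exactly $t!$ times the left-hand side of the Schur orthogonality relation in \Cref{lem:orthogonality} (with the roles of the primed/unprimed indices arranged appropriately). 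By that lemma this equals $\frac{t!}{\dim V_\lambda}\,\delta_{\lambda,\nu}\,\delta_{j',k}\,\delta_{j,\ell}$. Substituting back, only the $\nu = \lambda$ term survives, and the inner operator on $V_\lambda$ becomes $\frac{t!}{\dim V_\lambda}\ketbra{v_{\lambda,j}}{v_{\lambda,j'}}$; tensoring with $\ind_{W_\lambda}$ from the $\nu = \lambda$ block gives precisely $\frac{t!}{\dim V_\lambda}\,(\id_{W_\lambda} \ot \ketbra{v_{\lambda,j}}{v_{\lambda,j'}})$, as claimed.

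The only mild subtlety — and the step I would be most careful about — is bookkeeping with complex conjugates and the placement of $j$ versus $j'$: the dagger on $R_\sigma$ in $\bra{\beta}R^\dagger_\sigma\ket{\alpha}$ must be tracked consistently (using that $R_\sigma$ is unitary, $\bra{\beta}R^\dagger_\sigma\ket{\alpha} = \overline{\bra{\alpha}R_\sigma\ket{\beta}}$), and one must match the resulting conjugated matrix coefficient with the correctly-conjugated factor in \Cref{lem:orthogonality}. There is no analytic difficulty here; it is purely a representation-theoretic identity, and once the indices are lined up the Schur orthogonality relations finish it in one line. I would also note that $\E_{\pi \in S_t}[\cdots] = \frac{1}{t!}\sum_{\sigma}[\cdots]$ accounts for the factor of $t!$ converting the averaged orthogonality relation into the summed form needed here.
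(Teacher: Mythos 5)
Your approach is the same as the paper's: decompose $R_\sigma$ via Schur--Weyl, extract a Kronecker delta from the $W$-factor, and then finish with the Schur orthogonality relations on the $V$-factor. The key ideas and the structure of the argument match exactly, and the final answer is correct.

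However, you do have two index bookkeeping slips in the middle that happen to cancel. First, in your opening display the matrix coefficient should be $\overline{\bra{v_{\lambda,j}} R^{\lambda}_\sigma \ket{v_{\lambda,j'}}}$, not $\overline{\bra{v_{\lambda',j'}} R^{\lambda'}_\sigma \ket{v_{\lambda',j}}}$: the dagger both complex-conjugates \emph{and} transposes, so $\bra{v_{\lambda,j'}} R^{(\lambda),\dagger}_\sigma \ket{v_{\lambda,j}} = \overline{\bra{v_{\lambda,j}} R^{(\lambda)}_\sigma \ket{v_{\lambda,j'}}}$, i.e.\ $j$ and $j'$ swap positions. (Your parenthetical ``taking a complex conjugate because of the dagger'' is not quite right; it is conjugate \emph{transpose}.) Carrying your version forward, the orthogonality relation then gives you $\delta_{j',k}\delta_{j,\ell}$ as you state, but an operator whose $(k,\ell)$ entry is $\delta_{j',k}\delta_{j,\ell}$ is $\ketbra{v_{\lambda,j'}}{v_{\lambda,j}}$, not $\ketbra{v_{\lambda,j}}{v_{\lambda,j'}}$ --- that is the second slip. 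Fixing the first display propagates to $\delta_{j,k}\delta_{j',\ell}$, which correctly gives $\ketbra{v_{\lambda,j}}{v_{\lambda,j'}}$ as in the paper. The strategy and the invocation of the two representation-theoretic facts are exactly right; only the dagger/index bookkeeping in the intermediate steps needs to be redone.
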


\begin{proof}
We compute the matrix elements of $R^\dagger_\sigma$ in the Schur-Weyl basis. Recall that \Cref{lem:schur-weyl} implies that $R_{\sigma} = \sum_\lambda \id_{W_\lambda} \ot R^{(\lambda)}_\sigma$, where $R^{(\lambda)}_\sigma$ is the irreducible sub-representation of $R_\sigma$ on the Specht module $V_\lambda$. Thus, 
\begin{align}\label{eqn:perm-decomp}
(\bra{w_{\lambda', i'}}\bra{v_{\lambda', j'}}) R^\dagger_{\sigma} (\ket{w_{\lambda, i}} \ket{v_{\lambda, j}})
&= (\bra{w_{\lambda', i'}} \bra{v_{\lambda', j'}}) \left( \sum_\lambda \id_{W_\lambda} \ot R^{(\lambda), \dagger}_\sigma \right) (\ket{w_{\lambda, i}} \ket{v_{\lambda, j}}) \notag \\
&= \delta_{\lambda,\lambda'} \delta_{i,i'}  \cdot \bra{v_{\lambda, j'}} R^{(\lambda), \dagger}_\sigma \ket{v_{\lambda, j}} \notag\\
\ & =  \delta_{\lambda,\lambda'} \delta_{i,i'} \cdot \overline{\bra{v_{\lambda, j}} R^{(\lambda)}_\sigma \ket{v_{\lambda, j'}}} \,.
\end{align}
Therefore, for the rest of the proof, we consider $\ket{\alpha}$ and $\ket{\beta}$ with $\lambda = \lambda'$ and $i = i'$.

Again using the decomposition of $R_{\sigma}$ in terms of its irreducible sub-representations together with \Cref{eqn:perm-decomp}, we can write 
\begin{align*}
    \sum_{\sigma \in S_t} (\bra{w_{\lambda, i}}\bra{v_{\lambda, j'}}) R^\dagger_{\sigma} (\ket{w_{\lambda, i}} \ket{v_{\lambda, j}}) R_{\sigma} = \sum_{\lambda_1 \vdash t}  \id_{W_{\lambda_1}} \ot \left(\sum_{\sigma \in S_t} \overline{\bra{v_{\lambda, j}} R^{(\lambda)}_\sigma \ket{v_{\lambda, j'}}} R^{(\lambda_1)}_\sigma \right)
\end{align*}

Schur's orthogonality relations (\Cref{lem:orthogonality}) now imply that the operator in the parentheses is zero unless $\lambda_1=\lambda$, in which case it equals 
\[ \sum_{\sigma \in S_t} \overline{\bra{v_{\lambda, j}} R^{(\lambda)}_\sigma \ket{v_{\lambda, j'}}} R^{(\lambda)}_\sigma  = \frac{t!}{\dim(V_{\lambda})} \ketbra{v_{\lambda,j}}{v_{\lambda,j'}}.\]
Plugging this in gives the desired result.
\end{proof}

We are now in a position to prove \Cref{lem:pf-twirl} which expresses the result of applying the permutation-phase twirl to any state that is only supported on distinct strings on $\reg A$ in terms of the Schur-Weyl subspaces. 

\begin{proof}[Proof of \Cref{lem:pf-twirl}] We first expand $\ket{\phi}$ in the standard basis on $\reg{A}$:
\begin{align}
\ket{\phi}_{\reg{A \tilde E}} = \sum_{\substack{\bx  \in \distinct(d,t)}} \ket{\bx}_{\reg A} \ket{\tilde e_{\bx}}_{\reg{ \tilde E}} \,, \label{eqn:std_basis_expansion}
\end{align}
where $\ket{\tilde e_{\bx}}_{\reg E}$'s are unnormalized and not necessarily orthogonal vectors and we used that $\ket{\phi}_{\reg{A \tilde E}}$ is supported over the distinct subspace in the register $A$. 

Applying the permutation-phase twirl to the register $\reg{A}$, we have by linearity, 
\begin{align*}
\mompft{\proj{\phi}_{\reg{AE}}} &= 
\sum_{\substack{\bx,\by \in \distinct(d,t)}} \left( \mompft{\ketbra{\bx}{\by}_{\reg  A}} \right) \ot \ketbra{\tilde e_{\bx}}{\tilde e_{\by}}_{\reg{\tilde E}} \,.
\end{align*}
\Cref{lem:dis_string_to_perm} implies that the term in the parentheses is non-zero only when $\by = \bx_{\sigma}$ for some $\sigma \in S_t$. Since we sum over all possible $\bx, \by \in \distinct(d,t)$ and there is at most one such $\sigma$ for each pair of tuples $\bx$ and $\by$, it follows that 
\begin{align}\label{eqn:pf1}
\mompft{\proj{\phi}_{\reg{AE}}} 
&= \sum_{\substack{\bx \in \distinct(d,t)\\ \sigma \in S_t}}  \frac{\left(\Lambda R_\sigma\right)_{\reg A}}{\Tr[\Lambda]}  \ot \ketbra{\tilde e_{\bx}}{\tilde e_{{\bx_{\sigma}}}}_{\reg{\tilde E}} \notag \\
&= \frac{\Lambda_{\reg A}}{\Tr[\Lambda]} \sum_{\sigma \in S_t}  (R_\sigma)_{\reg A} \ot \Bigg( 
\sum_{\bx \in \distinct(d,t)}  \ketbra{\tilde e_{\bx}}{\tilde e_{{\bx_{\sigma}}}}_{\reg{\tilde E}}  \Bigg) \notag \\
&= \frac{\Lambda_{\reg A}}{\Tr[\Lambda]} \sum_{\sigma \in S_t}  (R_\sigma)_{\reg A} \ot \Tr_{\reg A}\left[(R^\dagger_{\sigma} \ot \id_{\reg{\tilde E}}) \proj{\phi}_{\reg{A \tilde E}}\right]\,.
\end{align}
For the second equality we simply rearranged sums, and for the last equality we wrote the expression in parentheses in the second line more compactly as a partial trace, which follows directly from the expansion in \Cref{eqn:std_basis_expansion}.

We now rewrite the above partial trace in the Schur-Weyl basis on $\reg A$. Let 
\begin{align}
    \ket{\phi}_{\reg{A \tilde E}} = \sum_{\lambda,i,j} (\ket{w_{\lambda, i}} \ket{v_{\lambda, j}})_{\reg{A}} \ket{e_{\lambda,i,j}}_{\reg{ \tilde E}} \label{eqn:phi_schur_expansion}
\end{align}
be the state in the Schur-Weyl basis, where $\ket{e_{\lambda,i,j}}_{\reg{ \tilde E}}$ are unnormalized and not necessarily orthogonal vectors.
Then, 
\begin{align*}
\Tr_{\reg A}\left[(R^\dagger_{\sigma} \ot \id_{\reg{\tilde E}}) \proj{\phi}_{\reg{A \tilde E}}\right] =  \sum_{\substack{\lambda, i, j \\ \lambda', i', j'}} (\bra{w_{\lambda', i'}}\bra{v_{\lambda', j'}}) R^\dagger_{\sigma} (\ket{w_{\lambda, i}} \ket{v_{\lambda, j}}) \ketbra{e_{\lambda, i, j}}{e_{\lambda', i', j'}}_{\reg{\tilde E}}.
\end{align*}
Plugging the above into \Cref{eqn:pf1}, 
\begin{align*}
\mompft{\proj{\phi}_{\reg{A \tilde E}}}
&= \frac{\Lambda_{\reg A}}{\Tr[\Lambda]} \sum_{\substack{\lambda, i, j \\ \lambda', i', j'}} \left( \sum_{\sigma} (\bra{w_{\lambda', i'}}\bra{v_{\lambda', j'}}) R^\dagger_{\sigma} (\ket{w_{\lambda, i}} \ket{v_{\lambda, j}}) R_\sigma \right)_{\reg A} \ot \ketbra{e_{\lambda, i, j}}{e_{\lambda', i', j'}}_{\reg{\tilde E}} \,.
\end{align*}
Applying \Cref{lem:K_simplified} to the term in parentheses, we can simplify this to 
\begin{align}\label{eqn:pf2}
\mompft{\proj{\phi}_{\reg{A \tilde E}}}
&= \frac{\Lambda_{\reg A}}{\Tr[\Lambda]} \sum_{\substack{\lambda, i, j, j'}} \frac{t!}{\dim (V_{\lambda})} \left(\id_{W_\lambda} \ot \ketbra{v_{\lambda,j}}{v_{\lambda, j'}}\right)_{\reg A} \ot \ketbra{e_{\lambda, i, j}}{e_{\lambda, i, j'}}_{\reg {\tilde E}}\,.
\end{align}
By \Cref{lem:perm_inv_sw}, we can write $\Lambda = \bigoplus_{\lambda \vdash t} \Lambda^{(\lambda)}_{W_{\lambda}} \ot \id_{V_{\lambda}}$, where $\Lambda^{(\lambda)}_{W_\lambda}$ are projectors supported on $W_{\lambda}$. Since $\id_{W_\lambda}$ is simply the identity on subspace $W_\lambda$, this implies 
\begin{align*}
\Lambda (\id_{W_\lambda} \ot \ketbra{v_{\lambda,j}}{v_{\lambda, j'}}) = \Lambda^{(\lambda)}_{W_\lambda} \ot \ketbra{v_{\lambda,j}}{v_{\lambda, j'}} \,.
\end{align*}
Plugging this into \Cref{eqn:pf2} and rewriting it as a partial trace,
\begin{align*}
\mompft{\proj{\phi}_{\reg{A \tilde E}}}
    & = \sum_{\lambda \vdash t}  \dfrac{t!}{\dim(V_{\lambda}) \Tr[\Lambda]} \Lambda^{(\lambda)}_{W_\lambda} \otimes \Tr_{W_\lambda}[\ind_{P_\lambda} \proj{\phi}_{\reg{A \tilde E}}\ind_{P_\lambda}] \,.
\end{align*}
Since $\Tr[\Lambda^{(\lambda)}_{W_\lambda}] = (\dim(V_{\lambda}) \Tr[\Lambda])/t!$ as we showed in \Cref{clm:distinct-trace}, the first tensor factor is indeed the maximally mixed state on the support of the projector $\Lambda^{(\lambda)}_{W_\lambda}$. This completes the proof.
\end{proof}

\bibliographystyle{alpha}
\bibliography{main}

\end{document}